\title{Bayesian Persuasion with Selective Disclosure\footnote{We thank Pak Hung Au,
Jeffrey Ely, Ying Gao, 
 Alexander Jakobsen,
Navin Kartik, Xiao Lin,
Elliot Lipnowski, Stephen Morris, Paula Onuchic,
Alessandro Pavan, Refine.ink, 
Jean Tirole, Alexander Wolitzky, Kun Zhang, and Gabriel Ziegler for helpful conversations and NSF grants SES-2337566 and
SES-2417162 for financial support.}}
\author{Yifan Dai\footnote{Department of Economics, Massachusetts Institute of Technology. Email: yfdai@mit.edu} \and
Drew Fudenberg\footnote{Department of Economics, Massachusetts Institute of Technology. Email: drew.fudenberg@gmail.com} \and
Harry Pei\footnote{Department of Economics, Northwestern University. Email: harrydp@northwestern.edu}}
\date{\today}
\documentclass[12pt]{article}

\usepackage{amsmath}
\usepackage{graphicx}
\usepackage{amsfonts}
\usepackage{amsthm}
\usepackage{setspace}
\usepackage{tikz}
\usepackage{amssymb}
\usepackage{appendix}
\usepackage{mathtools}   
\usepackage{microtype}
\usetikzlibrary{patterns}
\usepackage{sgame}
\usepackage{color}
\usepackage{soul}
\usepackage{enumitem}
\usepackage{multirow,array}
\usepackage[top=1in, bottom=1in, left=1in, right=1in]{geometry}

\usepackage[authoryear,round]{natbib}
\bibliographystyle{ecta}

\usepackage{hyperref}
\hypersetup{
	colorlinks=true,
	linkcolor=red!50!black,
	citecolor=blue!65!black,
	filecolor=magenta,      
	urlcolor=cyan,
}

\usepackage[nameinlink,noabbrev,sort,capitalise]{cleveref}

\DeclareMathOperator{\supp}{\textrm{supp}}

\begin{document}

\maketitle

\noindent \textbf{Abstract:} 
A sender first publicly commits to an experiment and then 
can  privately run additional experiments and selectively disclose their outcomes to a receiver. The sender has private information about the maximal number of additional experiments they can perform (i.e., their \textit{type}). 
 We show that the sender cannot attain their commitment payoff in any equilibrium
 if (i) the receiver is sufficiently uncertain about their type and (ii) the sender could benefit from selective disclosure after conducting their full-commitment optimal experiment.
 Otherwise, there can be  equilibria where the sender obtains their commitment payoff.\\
 
\noindent \textbf{Keywords:} Bayesian persuasion, lack of commitment, selective disclosure\\

\noindent \textbf{JEL Codes:} C73, D82, D83.

\newtheorem{Proposition}{\hskip\parindent\bf{Proposition}}
\newtheorem{Assumption}{\hskip\parindent\bf{Assumption}}
\newtheorem{Theorem}{\hskip\parindent\bf{Theorem}}
\newtheorem{Lemma}{\hskip\parindent\bf{Lemma}}
\newtheorem{Corollary}{\hskip\parindent\bf{Corollary}}
\newtheorem{Condition}{\hskip\parindent\bf{Condition}}
\newtheorem{Definition}{\hskip\parindent\bf{Definition}}
\newtheorem{Claim}{\hskip\parindent\bf{Claim}}
\newtheorem*{Condition1}{\hskip\parindent\bf{Condition FD}}
\thispagestyle{empty}

\newpage \setcounter{page}{1}

\begin{spacing}{1.5}
\section{Introduction}
The credibility of experts depends on their ability to commit to how they acquire and disclose information. For instance, a prosecutor who primarily seeks convictions but lacks commitment power cannot credibly convey useful information to a judge. But when the prosecutor can commit to an information-gathering process and to fully disclosing its results, as in the model of \citet{kamenica2011bayesian}, their statements can be informative and may influence the judge’s decision.

In practice, even when experts can commit to acquiring and truthfully disclosing certain types of information, they may be unable to  commit not to conduct additional investigations and not to conceal unfavorable findings.
This issue arises in the pharmaceutical industry, where some clinical trials are pre-registered with the FDA and must be reported, but  companies can conduct additional trials after approval. Because these post-approval trials are not pre-registered, companies sometimes conceal or delay the publication of their findings, even though they are legally required to report the results of all trials.\footnote{\citet*{devito2020compliance} reported that, among 4,209 trials obligated to submit results to the FDA between 2018 and 2019, only 722 (40.9\%) complied within the mandated 1-year time frame.
The pain medication Vioxx is a  high-profile example of delayed reporting: Its manufacturer, Merck, delayed releasing data from post-approval trials that linked the drug to heart problems. Another example is GSK’s antidepressant Paxil—legal filings revealed that the company concealed evidence from clinical trials showing that the drug could induce suicidal thoughts among teenagers. Compared with hiding or delaying disclosure, outright falsification of data is less common and typically subject to harsher penalties.} This can undermine the expert's credibility, as decision-makers (e.g., doctors, patients) may suspect that unfavorable evidence has been withheld and therefore discount the expert's statements.

Motivated by this concern, we study a model where a sender (the expert) publicly commits to an \textit{initial experiment} and discloses its outcome to a receiver. Unlike in the full commitment benchmark of \citet{kamenica2011bayesian}, the sender in our model cannot commit not to privately conduct additional experiments and conceal unfavorable results.  The sender has private information about the maximum number of additional experiments they can run, which is referred to as their \textit{type} or \textit{capacity}.

Our question is when the sender's limited commitment prevents them from attaining their payoff under full commitment (i.e., their \textit{commitment payoff}). Our results show that the answer depends on two factors: (i) the receiver's prior uncertainty about the sender's type and (ii) 
whether the sender benefits from selectively disclosing more information after conducting 
their full-commitment optimal  experiment  (hereafter, \textit{the optimal experiment}).

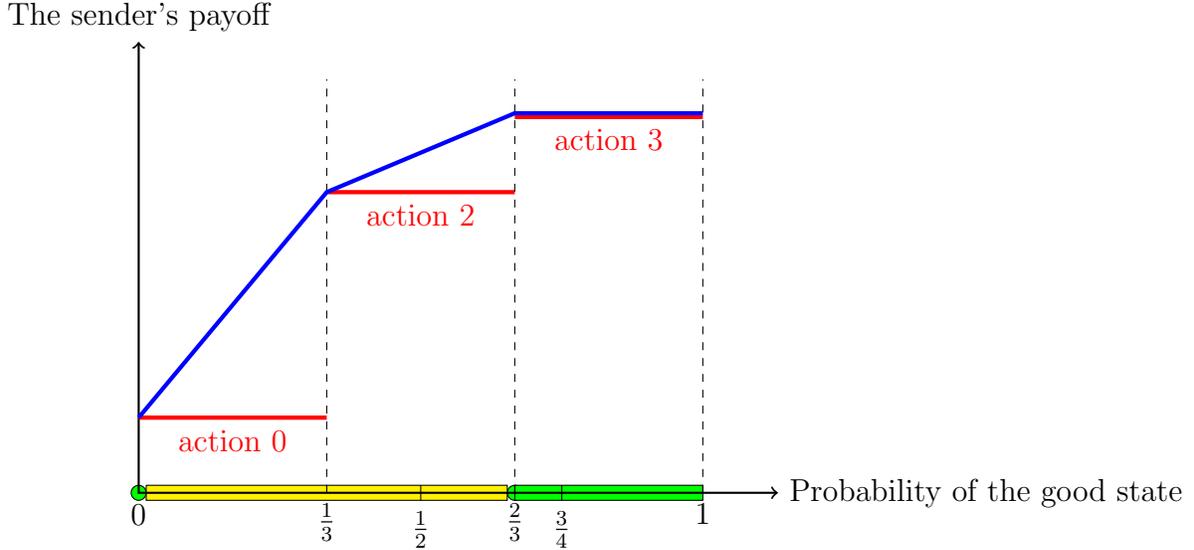
\begin{figure}
\begin{center}
\begin{tikzpicture}[scale=0.5]
\draw[fill=green] (0,0) circle [radius=0.2];
\draw[fill=green] (10,0) circle [radius=0.2];
\draw[fill=green] (10,0.2)--(15,0.2)--(15,-0.2)--(10,-0.2)--(10,0.2);
\draw[fill=yellow] (0.2,0.2)--(9.8,0.2)--(9.8,-0.2)--(0.2,-0.2)--(0.2,0.2);
\draw[thick, <->] (0,12)node[above]{The sender's payoff}--(0,0)node[below]{$0$}--(17,0)node[right]{Probability of the good state};
\draw[ultra thick, red] (0,2)--(2.5,2)node[below]{action $0$}--(5,2);
\draw[ultra thick, red] (5,8)--(7.5,8)node[below]{action $2$}--(10,8);
\draw[ultra thick, red] (10,10)--(12.5,10)node[below]{action $3$}--(15,10);
\draw[dashed] (5,0)node[below]{$\frac{1}{3}$}--(5,11);
\draw[dashed] (10,0)node[below]{$\frac{2}{3}$}--(10,11);
\draw[dashed] (15,0)node[below]{$1$}--(15,11);
\draw[ultra thick, blue] (0,2)--(5,8)--(10,10.1)--(15,10.1);
\draw (7.5,0.2)--(7.5,-0.2)node[below]{$\frac{1}{2}$};
\draw (11.25,0.2)--(11.25,-0.2)node[below]{$\frac{3}{4}$};
\end{tikzpicture}
\caption{The sender's utility under the receiver's best reply as a function of the receiver's belief (\textcolor{red}{red}) and its concave closure (\textcolor{blue}{blue}), with non-credible beliefs in yellow and credible beliefs in green.}
\end{center}
\end{figure}

To illustrate our findings, consider the  example  depicted in Figure 1: There are two states, \textit{good} and \textit{bad}, and three actions $0$, $2$, and $3$. The sender's utility equals the receiver's action. The receiver's optimal action is $0$ when the probability of the good state is less than $\frac{1}{3}$,  $3$ when the probability of the good state is more than $\frac{2}{3}$, and  $2$ otherwise. 

We first show that the sender's limited commitment matters only when the receiver faces uncertainty about the sender's capacity. This is because when the receiver knows that 
the sender can conduct at most $t$ additional experiments, the sender can prove that no information is hidden by  conducting $t$ uninformative additional experiments and disclosing all the outcomes.

When the receiver does not know the sender's capacity, the sender's ability to attain their commitment payoff 
hinges on the beliefs induced by the optimal experiment, in particular, whether the sender can benefit from selectively disclosing more information at any of those beliefs. In the example, the sender can benefit from disclosing more information at a belief that assigns probability $\frac{1}{2}$ to the good state: They can conduct an additional experiment that reveals the state and disclose only the outcome which proves that the state is good. This raises the receiver’s action from 
$2$ to $3$
when the state is good and keeps it at 
$2$ when the state is bad. In contrast, the sender cannot benefit from disclosing more information at beliefs that assign probability $0$ or $\frac{3}{4}$ to the good state  since the receiver's action will always be $0$ in the first case and will never exceed $3$ in the second case.
We refer to beliefs where the sender could gain by revealing more information as  \textit{non-credible beliefs}. In this example, these are the beliefs 
that assign probability strictly between $0$ and $\frac{2}{3}$ to the good state.

Theorem \ref{Theorem1} shows that in \textit{monotone environments} such as our example, the sender can attain their commitment payoff under all type distributions if and only if the optimal experiment does not induce a non-credible belief. If the optimal experiment induces at least one non-credible belief, then there is an open set of type distributions
under which all of the sender's equilibrium payoffs are uniformly bounded below their commitment payoff.

Theorem \ref{Theorem3} provides a complementary result that does not require monotonicity. It focuses on environments where the sender strictly prefers to fully disclose a certain state $\theta$
to any belief induced by the optimal experiment when the state is $\theta$.
This is more demanding than the optimal experiment inducing a non-credible belief, which is satisfied in our example
when the good state occurs with probability below $1/3$. We show that if sufficiently many sender types occur with probability bounded above $0$, the sender’s payoff in every equilibrium is uniformly bounded below their commitment payoff.

To show that the sender cannot attain their commitment payoff, we bound a higher type's equilibrium payoff from below by examining their payoff from the following deviation: first use a lower type's equilibrium strategy, then conduct a fully informative additional experiment and disclose its outcome if and only if doing so increases the sender's payoff. If the optimal experiment induces beliefs that satisfy our conditions, then in every equilibrium where the sender attains their commitment payoff, the higher type's payoff from our deviation exceeds the lower type's equilibrium payoff by a constant. 
Since payoffs are bounded, a contradiction arises once there are sufficiently many types that occur with positive probability.

We also show via an example that when the receiver's belief assigns positive probability only to a few sender types, the sender may obtain their commitment payoff even when the optimal experiment induces some non-credible belief.
We then show in 
Theorem \ref{Theorem2} that when one sender type occurs with probability close to $1$, there is an equilibrium where the sender's payoff is close to their commitment payoff. 

Our paper contributes to the literature on Bayesian persuasion by considering a novel form of limited commitment. While much of the literature focuses on the sender's inability to commit to a public experiment, we study the sender's inability to commit not to conduct additional private experiments. This generates a new economic friction: the sender's credibility is undermined by the receiver's suspicion about undisclosed follow-up investigations.

Thus our work differs from models where the sender can commit to a public experiment, but cannot commit to truthfully reveal what they learn (\citet*{lipnowski2022persuasion}), \citet{kreutzkamp2025persuasion}, \citet{lyu2025information}), or cannot commit to public experiments but can commit not to falsify information (\citet{henry2019research}, \cite{titova2025persuasion}), or cannot commit to public experiments but can communicate through a mediator with commitment (\citet*{myerson1982optimal}, \citet*{forges1986approach}, \citet*{goltsman2009mediation}, \citet{corrao2025bounds}) or cannot commit not to deviate to strategies that induce the same message distribution (\citet{lin2024credible}), or cannot commit in the stage game but interacts repeatedly with a sequence of receivers (e.g., \citet*{pei2023repeated}, \citet{best2024persuasion}, \citet*{mathevet2024reputation}).\footnote{\citet{lipnowski2020cheap} and \citet{kamenica2024commitment}  compare the sender's commitment payoff with their  payoff in a cheap talk game where they observe the state.}

In \citet{felgenhauer2014strategic} and  \citet{felgenhauer2017bayesian}, the sender cannot commit to a public experiment, but  can privately conduct a sequence of binary experiments and then select a subset of the resulting outcomes which they disclose to a receiver.\footnote{\citet{lou2023private} is similar to \citet{felgenhauer2017bayesian} except that once a sender discloses the outcome of an experiment, they must disclose the outcomes of all experiments conducted before that.} A more closely related paper  is  recent  independent work by \citet{st2025}. As in  our model, the receiver is  uncertain about how many additional experiments the sender can conduct. The paper  considers more general payoff functions for the sender  and provide conditions under which the commitment payoff is attained in at least one weak Perfect Bayesian Equilibrium (wPBE), a concept that is weaker than Perfect Bayesian Equilibrium (PBE). In contrast, we use an extension of PBE that reduces to it in finite signaling games.
We elaborate  on the effect of the  different  equilibrium concepts in Section \ref{sec4}.

In our model, the sender receives private information about their type before conducting additional experiments. This is related to the literature on information design with an informed sender, which includes the works of \citet{perez2014interim}, \citet{koessler2023informed}, and  \citet{zapechelnyuk2023equivalence}. In those papers, the sender's private information directly affects the receiver's payoff, while in our model it is payoff-irrelevant for the 
receiver.\footnote{\citet{rappoport2017incentivizing} study a model where the sender has private information about the single experiment that was conducted, but it  ends up not mattering for the receiver's belief about the state.}

Our paper is also related to the literature following \citet{dye1985disclosure} on the disclosure of evidence when the amount of evidence available to the sender is unknown to the receiver, and especially to \citet{dziuda2011strategic} and \citet{gao2025inference}  where a sender has multiple pieces of evidence and decides which subset of them to disclose. In our paper, the evidence available to the  sender is determined by their  choice of additional experiments. \citet{pei2025community} studies a repeated game with anonymous random matching, where players can selectively disclose signals generated by their past actions and players do not know how many signals their partners have generated.  This feature is similar to our model where the sender's limited commitment lowers their payoff only when the receiver faces uncertainty regarding the 
sender's capacity.

\section{The Baseline Model}\label{sec2}
A sender ($s$) interacts with a receiver ($r$). The sender's payoff $u^s(a)$ depends only on the receiver's action $a \in A$, so they have \emph{transparent motives} in the sense of \cite{lipnowski2020cheap}. The receiver's payoff $u^r(\theta,a)$ depends on both $a$ and the state of the world $\theta \in \Theta$. The state is initially unknown to both players, and is drawn according to a distribution $\pi_0 \in \Delta (\Theta)$ that has full support.
We assume that $\Theta$ and $A$ are finite sets.

An \textit{experiment} $\sigma$ is a mapping from $\Theta$ to the set of probability distributions over outcomes $s \in S$, where $S$ is a countably infinite set. Let $\Sigma$ denote the collection of all experiments, and for each experiment $\sigma \in \Sigma$, let $\sigma(s \mid \theta)$ denote the probability of $s$ conditional on $\theta$.

At the beginning of the game, the sender chooses an \textit{initial experiment}  $\sigma_0$, and then  both players observe $\sigma_0$ and its realized outcome $s_0 \in S$. After observing $(\sigma_0,s_0)$ but before the receiver chooses their action, the sender may conduct some \textit{additional experiments}. 
The outcomes of different experiments are assumed to be independent conditional on $\theta$. The maximal number of additional experiments that the sender can conduct is $t \in \{0,1,...\} \equiv \mathbb{N}$, which is  drawn according to $p \in \Delta (\mathbb{N})$. The sender privately observes $t$ after choosing $\sigma_0$.\footnote{Section \ref{sec4} discusses the case where the sender observes $t$ before choosing their initial experiment.} We refer to $t$ as the sender's (interim) \textit{type}.

The sender conducts additional experiments sequentially after observing $(\sigma_0,s_0,t)$, so their choice of the $k$th additional experiment can depend on $(\sigma_0,s_0,t)$ as well as the first $k-1$ additional experiments and their outcomes $\{\sigma_i,s_i\}_{i=1}^{k-1}$. The sender then chooses  a subset of  outcomes of the additional experiments to disclose before the receiver
chooses $a \in A$.

We assume that the receiver observes an additional experiment and its outcome if and only if the sender discloses that outcome. Hence, the receiver cannot directly observe $t$, the number of additional experiments the sender conducted, and the contents and outcomes of the undisclosed additional experiments. 


Let $\mathcal{H}_k$ denote the set of sequences that consist of $k$ pairs of an experiment $\sigma$ together with one of its outcomes $s$. 
Let $\mathcal{H} \equiv \bigcup_{k=1}^{+\infty} \mathcal{H}_k$ with a typical element denoted by $h \in \mathcal{H}$. The receiver's information sets are elements of  $\mathcal{H}$ corresponding to the disclosed experiments.\footnote{Whether the receiver observes the order of the disclosed experiments does not affect any of our results. }   
The receiver's strategy $\alpha: \mathcal{H} \rightarrow \Delta (A)$ maps their information sets to their actions.

The sender's information sets other than the initial node belong to $\mathcal{H} \times \mathbb{N}$.
Their pure strategy consists of (i) an initial experiment $\sigma_0 \in \Sigma$, 
(ii) a mapping from every type $t \geq 1$ and every $h\in \mathcal{H}_k$ with $1\leq k \leq t$ to $\Sigma \cup \{\emptyset\}$ that captures the sender's choice of the next experiment given their type and the outcomes of the previous experiments, where $\emptyset$ stands for conducting no further experiment,\footnote{The type $t$ sender can only choose $\emptyset$ (i.e., conduct no further experiment) for any  $h \in \mathcal{H}_{t+1}$.} and (iii) a mapping from every type $t \geq 1$ and every
 $h \in \mathcal{H}_{k}$ with $ 1 \leq k \leq t+1$ to subsets of $\{1,2,...,k-1\}$, 
where $i$ belongs to the chosen subset when the $i$th additional experiment is disclosed.
The sender's \textit{mixed strategy} is a distribution of their pure strategies.

The receiver's \textit{naive belief} at $h \in \mathcal{H}$, denoted by $\widehat{\pi}_h \in \Delta (\Theta)$, is defined as their belief  about $\theta$ if  they observe $h$ and believe that the sender has disclosed the outcomes of all the experiments they conducted. By definition, the receiver's naive belief depends only on the experiment-outcome pairs in $h$ but not on the order of these pairs and the sender's strategy.
The receiver's naive belief need not coincide with their posterior belief since they may suspect that the sender has concealed some outcomes from the additional experiments.

Our game is neither finite nor a multistage game with observable actions, so neither sequential equilibrium (\citet{kreps1982sequential}) nor Perfect Bayesian Equilibrium  (\citet{fudenberg1991perfect}) directly applies.  Instead, we will use \citet{fudenberg1991perfect}'s  \textit{Perfect Extended Bayesian Equilibrium} (PEBE, or sometimes, equilibrium) as our solution concept.

Formally,
 a PEBE for our game consists of a strategy for each player and a mapping from $\mathcal{H}$ to $\Delta (\Theta \times \mathbb{N})$ that captures the receiver's belief about $(\theta,t)$, such that (i) at every information set, each player's strategy best replies to their opponent's strategy given their belief, (ii) the receiver's belief is derived from Bayes rule on the equilibrium path, and is consistent with some conditional probability system induced by the sender's strategy at every information set.\footnote{\label{fn7}A conditional probability system (CPS) on set $\Omega$ is a function $\mu(\cdot|\cdot): 2^{\Omega} \times 2^{\Omega}\backslash \{\emptyset\} \rightarrow [0,1]$ such that (i) for all non-empty $\Omega_0 \subseteq \Omega$, $\mu(\cdot|\Omega_0)$ is a probability distribution on $\Omega_0$ and (ii) for all $\Omega_3 \subseteq \Omega_2 \subseteq \Omega_1 \subseteq \Omega$ with $\Omega_2 \neq \emptyset$, we have $\mu(\Omega_3|\Omega_2) \mu(\Omega_2|\Omega_1) =\mu(\Omega_3|\Omega_1)$. In our game, $\Omega$ is the set of feasible terminal nodes, so every CPS assigns conditional probability $0$ to the sender conducting more experiments than their type allows. Appendix A develops the very minor extension of the PEBE definition that is needed to handle the fact that the sender's choice of experiments induces a distribution of verifiable signals.  It shows that PEBE has the intuitively-appealing property that when $t$ is bounded above by $t^*$, the receiver's belief conditional on observing any $h \in \mathcal{H}_{t^*+1}$  equals their naive belief. This property is not required by the weak Perfect Bayesian Equilibrium, the solution concept used by \citet{st2025}.} 

\section{Analysis}\label{sec3}
We examine when the sender's limited commitment prevents them from attaining their \textit{commitment payoff}, i.e., their payoff in the benchmark of \citet{kamenica2011bayesian}. Our results suggest that the answer hinges on two factors 
(i) whether the receiver faces enough uncertainty about the sender's type and
(ii) whether the sender can benefit from selectively disclosing more information after conducting their optimal experiment.

\subsection{Assumptions \& Preliminaries}\label{sub3.1}
We now introduce Assumptions \ref{Ass1} and \ref{Ass2}, which will be maintained throughout the paper.
Let $A^*(\pi) \equiv \arg\max_{a \in A} \sum_{\theta \in \Theta} \pi(\theta) u^r(\theta,a)$ denote the set of receiver-optimal actions under $\pi \in \Delta (\Theta)$. 
\begin{Assumption}\label{Ass1}
For every $a \in A$ and $\pi \in \Delta (\Theta)$ such that $a \in A^*(\pi)$, 
there exists $\pi' \in \Delta(\Theta)$ with $\supp (\pi') \subseteq \supp (\pi)$ that satisfies
$A^*(\pi') = \{a\}$.
 \end{Assumption}
Assumption \ref{Ass1}  requires that for every action $a$ that is optimal for the receiver under some belief $\pi$ about the state, there exists $\pi'$ with a weakly smaller support under which $a$ is strictly optimal for the receiver. It implies that the receiver has a strict best reply when they know the state. Since $\Theta$ and $A$ are finite, Assumption \ref{Ass1} is satisfied for a set of $u^r(\theta,a)$ of full Lebesgue measure in $\mathbb{R}^{|\Theta| \times |A|}$ by Proposition 2 of \citet*{lipnowski2024perfect}.

Let $u(\theta)$ denote the sender's payoff when the receiver plays their (unique) best reply in state $\theta$. 
The sender's \textit{full-disclosure payoff} is
$\underline{V} (\pi_0) \equiv \sum_{\theta \in \Theta} \pi_0 (\theta) u(\theta)$. 

As in \citet{kamenica2011bayesian}, once the receiver's prior $\pi_0 \in \Delta (\Theta)$ is fixed, each experiment $\sigma$ is characterized by the distribution it induces over naive beliefs $\pi \in \Delta(\Theta)$, which we will refer to as \textit{beliefs} for short.   We use $\sigma[\pi]$ to denote the probability with which belief $\pi$ occurs 
in the distribution over beliefs that characterizes $\sigma$, and 
say that an experiment $\sigma$ \textit{induces} belief $\pi$ if $\sigma[\pi]>0$. 
We use $\Sigma[\pi] \subseteq \Delta (\Delta (\Theta))$ to denote the set of distributions over beliefs where the mean of those distributions equals $\pi$.
Let $\overline{u}(\pi) \equiv \max_{a \in  A^{*}(\pi)} u^s(a)$ 
denote the sender's highest payoff when the receiver's belief is $\pi$.
The sender's \textit{commitment payoff} is defined as
\begin{equation}\label{optimalexperiment}
\overline{V}(\pi_0) \equiv \max_{\sigma \in \Sigma [\pi_0]} \sum_{\pi \in \Delta(\Theta)} \sigma[\pi] \overline{u}(\pi).
\end{equation} 
By definition, $\underline{V}(\pi_0) \leq \overline{V}(\pi_0)$.
An  experiment is \textit{optimal} if it attains the maximum in (\ref{optimalexperiment}).  
Assumption \ref{Ass1} guarantees that each optimal experiment is the limit of a sequence of experiments that always induce strict receiver-incentives (see Lemma \ref{L1} in Appendix \ref{App: Assumptions}). A standard upper hemi-continuity argument then shows that 
when the sender can commit not to conduct additional experiments,
they can secure payoffs arbitrarily close to $\overline{V}(\pi_0)$, regardless of how the receiver breaks ties.\footnote{Hence, Assumption \ref{Ass1} leads  to the ``continuity property'' in \citet*{ali2024design}.}
To simplify our analysis, we focus on $(\pi_{0},u^s,u^r)$ under which there is a unique optimal experiment:
\begin{Assumption}\label{Ass2}
There is a unique optimal experiment, which is denoted by $\sigma^*$. 
\end{Assumption}
When $|\Theta|=2$, Assumption \ref{Ass2} is satisfied for all $(\pi_{0},u^s,u^r)$ except for those where 
$\overline{V}(\pi_0) = \max_{\pi \in \Delta(\Theta) }\overline{u}(\pi)$ and a zero-measure set of $(u^s,u^r) \in \mathbb{R}^{|\Theta| \times |A|^2}$.

Proposition \ref{Prop1} shows that the sender attains their commitment payoff in all equilibria unless 
(i) their commitment payoff exceeds their full-disclosure payoff, and (ii) the receiver faces nontrivial uncertainty about the sender’s capacity to conduct additional experiments.
\begin{Proposition}\label{Prop1}\leavevmode
\begin{enumerate}
    \item For any
    $p \in \Delta (\mathbb{N})$,
    the sender's payoff belongs to $[\underline{V}(\pi_0),\overline{V}(\pi_0)]$ in every PEBE.
\item  For any $p \in \Delta (\mathbb{N})$ that is degenerate,
the sender's payoff is $\overline{V}(\pi_0)$ in every PEBE.
\end{enumerate}
\end{Proposition}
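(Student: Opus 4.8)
The plan is to prove the upper bound $\overline{V}(\pi_0)$ of the first part separately from the two lower bounds --- $\underline{V}(\pi_0)$ for the first part and $\overline{V}(\pi_0)$ for the second --- the upper bound in the second part being immediate from the first.

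For the upper bound in the first part I would use the martingale property of the receiver's posterior. Fix a PEBE. On the equilibrium path the receiver's belief at an information set $h$ is the Bayesian posterior given $h$, so its marginal $\mu_h \in \Delta(\Theta)$ on states satisfies $\mathbb{E}[\mu_h] = \pi_0$, the expectation being over the equilibrium distribution of $h$; this is the law of iterated expectations and uses only that $\theta$ is drawn from $\pi_0$ and is unaffected by the sender's moves. Hence the equilibrium distribution $\tau$ of $\mu_h$ lies in $\Sigma[\pi_0]$. At every information set the receiver best replies to their belief, so the sender's expected continuation payoff at $h$ is at most $\overline{u}(\mu_h)$; averaging over $h$, the sender's payoff is at most $\int \overline{u}\, d\tau$, which is at most $\overline{V}(\pi_0)$ since $\tau \in \Sigma[\pi_0]$ and $\overline{V}(\pi_0)$ is the maximum in (\ref{optimalexperiment}) over exactly that set. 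The point to stress is that the off-path information sets, where PEBE permits non-Bayesian beliefs, never enter this computation.

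For the two lower bounds I would exhibit a single deviation in each case. For $\underline{V}(\pi_0)$: let every type choose a fully revealing initial experiment $\sigma_0$ (feasible since $\Theta$ is finite and $S$ is countably infinite) and then conduct and disclose nothing further. Whatever the receiver may suspect about concealed experiments, the publicly observed pair $(\sigma_0,s_0)$ identifies $\theta$, so the receiver's state-belief is a point mass; by Assumption \ref{Ass1} the receiver then has a unique best reply, the sender obtains $u(\theta)$, and averaging over $\theta \sim \pi_0$ this deviation secures $\underline{V}(\pi_0)$, a lower bound on the equilibrium payoff. For $\overline{V}(\pi_0)$ in the second part, say $p$ assigns probability one to $t^* \in \mathbb{N}$, fix $\varepsilon>0$, and use Lemma \ref{L1} to pick an experiment $\sigma^*_\varepsilon$ close to $\sigma^*$ that always induces strict receiver incentives. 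Consider the deviation: choose $\sigma_0 = \sigma^*_\varepsilon$; then, having observed $t = t^*$ (which occurs with probability one), run $t^*$ completely uninformative additional experiments and disclose all of them. The receiver then observes every experiment the sender conducted --- the public initial one together with $t^*$ disclosed additional ones, an element of $\mathcal{H}_{t^*+1}$ --- so by the PEBE consistency property recorded in footnote \ref{fn7} (applicable because $t \le t^*$) the receiver's belief equals the naive belief, which the uninformative experiments leave equal to the belief induced by the outcome of $\sigma^*_\varepsilon$. Strict incentives make the receiver's best reply unique at every on-path outcome, so this deviation secures a payoff that, by Lemma \ref{L1} and a standard upper hemi-continuity argument, is arbitrarily close to $\overline{V}(\pi_0)$ as $\varepsilon \to 0$. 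Hence the equilibrium payoff is at least $\overline{V}(\pi_0)$, and with the first part it equals $\overline{V}(\pi_0)$.

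I expect the main obstacle to be the upper bound in the first part: one must be careful that the equilibrium payoff depends only on the on-path distribution of the receiver's state-belief, and that this distribution is genuinely mean-$\pi_0$ even though the sender holds payoff-irrelevant private information and discloses selectively --- once more the law of iterated expectations, now using that the type $t$ is independent of $\theta$. A secondary delicate point, for the second part, is that the deviation works only because PEBE consistency pins down the receiver's \emph{off-path} belief to the naive belief once $t^*+1$ experiments have been disclosed; this is exactly the property highlighted in footnote \ref{fn7}, and is where PEBE does work that weak Perfect Bayesian Equilibrium would not.
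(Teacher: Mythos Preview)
Your proposal is correct and follows essentially the same approach as the paper: the martingale property of posteriors for the upper bound, a fully revealing initial experiment for the $\underline{V}(\pi_0)$ lower bound, and the deviation to $\sigma^*_\varepsilon$ plus $t^*$ uninformative disclosed experiments (so that PEBE consistency forces the receiver's belief to the naive one) for the degenerate case. One small imprecision: Lemma~\ref{L1} does not guarantee strict receiver incentives \emph{at every} induced belief, only on a set of probability more than $1-\varepsilon$; the paper handles this by working directly with $\underline{V}^{\sigma^\varepsilon}>\overline{V}(\pi_0)-\varepsilon$, and your ``standard upper hemi-continuity argument'' is doing the same work implicitly.
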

The proof is in Appendix \ref{secAA}. Intuitively, the sender can secure $\underline{V}(\pi_0)$ by fully disclosing the state in the initial experiment. When the receiver knows $t$, the sender can secure payoff arbitrarily close to $\overline{V}(\pi_0)$ by (i) choosing an initial experiment that is close to $\sigma^*$ under which the receiver has strict incentives at each induced belief, which exists under Assumption \ref{Ass1},  and (ii) conducting $t$ non-informative additional experiments and disclosing their outcomes.

\subsection{Main Results}\label{sub3.2}
Our main results examine when the sender's limited commitment prevents them from attaining their commitment payoff in \textit{all} equilibria. 
We first introduce the definition of \textit{credible beliefs}:
\begin{Definition}
A belief $\pi \in \Delta (\Theta)$ about the state is \textup{credible} if  
for every $\theta \in \supp(\pi)$, we have $\overline{u}(\pi) \geq u(\theta)$.
Otherwise, this belief
is \textup{non-credible}.  
\end{Definition}
Both degenerate beliefs and beliefs where the sender attains their maximum utility are credible as illustrated in our initial example. Theorem \ref{Theorem1} shows that in \textit{monotone} payoff environments, which have been a primary focus of the persuasion literature, the sender can attain their commitment payoff under all type distributions \textit{if and only if} the optimal experiment $\sigma^*$ does not induce a non-credible belief. 
\begin{Definition} 
$(u^s,u^r)$ is \textup{monotone} if there exist complete orders  on $\Theta$ and $A$, respectively, under which $u^s(a)$ is strictly increasing in $a$ and $u^r(\theta,a)$ has strictly increasing differences. 
\end{Definition}
\begin{Theorem}\label{Theorem1}
Suppose $(u^s,u^r)$ is monotone. 
\begin{enumerate}
  \item If all beliefs induced by $\sigma^*$ are credible, then for every 
  $p \in \Delta (\mathbb{N})$,
  there exists a PEBE in which the sender's payoff is $\overline{V}(\pi_0)$.
  \item If $\sigma^*$ induces at least one non-credible belief, then there exist $\eta>0$ and an open set of type distributions (in the product topology) such that, under each type distribution $p$ in this set, 
the sender's payoff in every PEBE is no more than $\overline{V}(\pi_0)-\eta$. 
\end{enumerate}
\end{Theorem}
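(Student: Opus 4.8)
\noindent\textit{Proof plan.} I would prove the two parts separately.

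\noindent\textit{Part 1.} I would exhibit a PEBE in which the sender commits to $\sigma^*$, runs no additional experiments, and discloses the publicly observed outcome $s_0$ of $\sigma^*$. On the induced path the receiver's belief about $\theta$ is the naive belief $\widehat\pi_{s_0}$, and I let the receiver play the $u^s$-maximizing element of $A^*(\widehat\pi_{s_0})$, so the sender earns $\overline u(\widehat\pi_{s_0})$ after each outcome and $\sum_\pi\sigma^*[\pi]\overline u(\pi)=\overline V(\pi_0)$ overall. Off path, at every disclosure history $h$ I specify that the receiver believes the $\nu\in\Delta(\Theta)$ with $\supp(\nu)\subseteq\supp(\widehat\pi_h)$ that minimizes $\overline u$, and plays a best reply. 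Because $s_0$ is always disclosed, every history following $\sigma^*$ has $\supp(\widehat\pi_h)\subseteq\supp(\widehat\pi_{s_0})$, and $\overline u(\nu)\le\min_{\theta\in\supp(\widehat\pi_h)}u(\theta)$ since each $\delta_\theta$ is feasible and $\overline u(\delta_\theta)=u(\theta)$ by Assumption 1; hence in realized state $\theta$ any off-path disclosure yields at most $u(\theta)\le\overline u(\widehat\pi_{s_0})$, the last step being exactly credibility of $\widehat\pi_{s_0}$, so the on-path behavior is a best reply. A computation in the style of Proposition 1 shows that committing to any other initial experiment yields at most $\underline V(\pi_0)$. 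The delicate step is that the off-path beliefs must be consistent with a conditional probability system induced by totally mixed sender strategies converging to the equilibrium; I would verify this with the extension of PEBE in Appendix A, using trembles under which, off path, the sender is believed to have additionally run and concealed an experiment that conclusively rules out the states absent from the worst-case support.

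\noindent\textit{Part 2.} Let $V_t$ be type $t$'s equilibrium payoff and $U_t$ its realized expected payoff along the play, as a function of the state, the experiment outcomes, and the sender's randomization. Since type $t{+}1$ can replicate type $t$'s behavior and still has an experiment in reserve, it can also run a fully informative experiment, learn $\theta$, and disclose that conclusive outcome exactly when this raises its payoff; the conclusive outcome pins the receiver's marginal over $\Theta$ to $\delta_\theta$, where by Assumption 1 the reply is unique and yields $u(\theta)$, so in every PEBE
\[
V_{t+1}\ \ge\ V_t+\mathbb{E}\big[(u(\theta)-U_t)^+ \,\big|\, \text{type }t\big].
\]
Summing against $p$ and using the tower rule, $\mathbb{E}[(u(\theta)-U)^+]\le\sum_t p(t)(V_{t+1}-V_t)$; and since $V_t$ is nondecreasing in $t$ and lies in $[\underline V(\pi_0),\max_a u^s(a)]$ by Proposition 1, the telescoping right-hand side is at most $\big(\max_{t\le N}p(t)+\sum_{t>N}p(t)\big)\big(\max_a u^s(a)-\underline V(\pi_0)\big)$ for every $N$.

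\noindent For the opposite bound, suppose the sender's ex-ante payoff is at least $\overline V(\pi_0)-\eta$. The play induces on the receiver's posterior about $\theta$ a distribution $\tau\in\Sigma[\pi_0]$ with $\sum_\pi\tau[\pi]\overline u(\pi)\ge\overline V(\pi_0)-\eta$; by Assumption 2, compactness of $\Sigma[\pi_0]$, and upper semicontinuity of $\overline u$, $\tau$ is within $o(1)$ of $\sigma^*$ as $\eta\to0$. Fixing a non-credible belief $\pi^\dagger$ induced by $\sigma^*$ and a state $\theta^\dagger\in\supp(\pi^\dagger)$ with $u(\theta^\dagger)>\overline u(\pi^\dagger)$, the event ``posterior near $\pi^\dagger$ and state $=\theta^\dagger$'' then has probability at least $\sigma^*[\pi^\dagger]\pi^\dagger(\theta^\dagger)-o(1)$, and on it $u(\theta)-U\ge u(\theta^\dagger)-\overline u(\pi^\dagger)-o(1)$, so $\mathbb{E}[(u(\theta)-U)^+]\ge\delta_0-o(1)$ with $\delta_0:=\sigma^*[\pi^\dagger]\,\pi^\dagger(\theta^\dagger)\,(u(\theta^\dagger)-\overline u(\pi^\dagger))>0$. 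Comparing the two bounds, a contradiction arises for every $p$ in the open set $\{p:p(t)<\kappa'\text{ for }t\le N,\ \sum_{t>N}p(t)<\beta\}$, once $N$ is large and $\kappa',\beta>0$ are small enough that $(\kappa'+\beta)(\max_a u^s(a)-\underline V(\pi_0))<\delta_0/2$, provided $\eta$ is small enough that the $o(1)$ term falls below $\delta_0/2$. All these thresholds depend only on the primitives, which yields the required $\eta$ and open set.

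\noindent\textit{Main obstacle.} The crux is the near-optimality step --- converting ``the ex-ante payoff is within $\eta$ of $\overline V(\pi_0)$'' into quantitative control of the distribution of the receiver's posterior and of the realized payoff. This relies on (i) near-maximizers of $\sum_\pi\tau[\pi]\overline u(\pi)$ over $\Sigma[\pi_0]$ being near the unique maximizer $\sigma^*$, and (ii) running the argument on the ex-ante distribution rather than type by type, since the receiver's beliefs live on $\Theta\times\mathbb{N}$ and the per-type distributions over posteriors need not be Bayes-plausible. Monotonicity enters here: it lets the ``excess'' state $\theta^\dagger$ be taken extremal so that the gap $u(\theta^\dagger)-\overline u(\pi^\dagger)$ stays bounded away from zero along near-optimal play, which is what makes the mere existence of a non-credible induced belief sufficient in place of the stronger disclosure hypothesis of Theorem 3.
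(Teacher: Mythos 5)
Your Part 1 takes a genuinely different route from the paper --- skeptical off-path beliefs instead of monotonicity --- and this is where the gap lies. You specify that at every off-path disclosure history $h$ the receiver holds the $\overline{u}$-minimizing belief supported in $\supp(\widehat{\pi}_h)$, and you never invoke monotonicity. But that belief specification is not admissible under PEBE when the type distribution is bounded: by statement 3 of Lemma \ref{Lem: necessary cond PEBE} in Appendix \ref{PEBE}, if $n=\max\supp(p)$ then at any $h\in\mathcal{H}_{n+1}$ --- i.e., whenever the sender has disclosed as many additional experiments as the largest type can run --- the receiver's posterior must equal the naive belief $\widehat{\pi}_h$. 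Your proposed tremble story (``the sender is believed to have additionally run and concealed an experiment'') is exactly what is infeasible at those histories, since no type has capacity left. Concretely, for $p$ degenerate at $t^*\geq 1$, type $t^*$ can deviate after an on-path outcome with credible belief $\pi$ by running $t^*$ additional experiments that split $\pi$ into sub-support beliefs and disclosing all of them, forcing the receiver to the naive posterior; your construction gives no reason this is unprofitable, and without monotonicity it can be profitable (credibility of $\pi$ alone does not bound $\overline{u}(\pi')$ for interior $\pi'$ with $\supp(\pi')\subseteq\supp(\pi)$). This is precisely where the paper uses monotonicity: $\overline{u}$ is increasing in first-order stochastic dominance, so a credible $\pi$ satisfies $\overline{u}(\pi)\geq\overline{u}(\pi')$ for every sub-support $\pi'$, which lets the paper set all posteriors equal to naive beliefs, break ties in the sender's favor, and dispense with skeptical beliefs altogether. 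Indeed, if your argument were valid as written it would prove Part 1 without monotonicity, which the paper's remark on non-monotone environments (the stronger credibility notion) indicates is false. The gap is fixable, but only by importing the paper's monotonicity step.

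Your Part 2 is essentially the paper's proof of Lemma \ref{lemma:non-credible}: the same deviation (imitate a lower type, then run a conclusive experiment and disclose iff it helps) giving $V_{t+1}\geq V_t+\mathbb{E}\big[(u(\theta)-U_t)^+\mid t\big]$, and the same use of Assumption \ref{Ass2}, compactness of $\Sigma[\pi_0]$, and upper semicontinuity of $\overline{u}$ to show that near-commitment play places mass, conditional on $\theta^\dagger$, on a neighborhood of a non-credible belief. Your aggregation --- bounding the $p$-weighted sum of gains by $\big(\max_{t\leq N}p(t)+\sum_{t>N}p(t)\big)(\overline{v}-\underline{v})$ via telescoping and monotonicity of $V_t$ --- is a clean alternative to the paper's construction around the uniform distribution on $\{0,\dots,n\}$, and it yields a different but equally valid open set; the theorem only asserts existence of one. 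Two repairs are needed: the increment inequality must be taken between consecutive types in $\supp(p)$ (for $t$ with $p(t+1)=0$, and for the top type of a bounded-support $p$, there is no higher type whose rationality you can invoke; those terms must instead be absorbed into the $\max_{t\leq N}p(t)$ and tail bounds, which your set does permit). Also, your closing claim that monotonicity is what makes a single non-credible belief sufficient is incorrect: neither the paper's Lemma \ref{lemma:non-credible} nor your own Part 2 argument uses monotonicity --- the stronger hypothesis of Theorem \ref{Theorem3} is what is needed to handle all sufficiently dispersed $p$ rather than an open set of them.
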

In the example of  the introduction, 
 the optimal experiment induces a non-credible belief when the prior probability of the good state lies in $(0, 2/3)$. Consequently, by Theorem \ref{Theorem1}, the sender’s payoff is strictly  below the commitment payoff for an open set of type distributions. By contrast, in the prosecutor–judge example of \cite{kamenica2011bayesian}, all beliefs induced by the optimal experiment are credible. As a result, the sender can attain the commitment payoff in some equilibrium, irrespective of the type distribution.

To show the first statement, suppose the sender chooses $\sigma^*$ as the initial experiment and conducts no additional experiments on the equilibrium path; the receiver breaks ties in favor of the sender; and the receiver's posterior belief at every information set equals their naive belief. This is an equilibrium in monotone environments since $\overline{u}(\cdot)$ is increasing in first-order stochastic dominance order: 
If $\pi \in \Delta (\Theta)$ satisfies $\overline{u}(\pi) \geq u(\theta)$ for every $\theta \in \supp(\pi)$,  then for every $\pi'$ that satisfies $\textrm{supp}(\pi') \subseteq \textrm{supp} (\pi)$, we have
$\overline{u} (\pi) \geq \overline{u} (\pi')$. Hence,
at any credible belief $\pi$ induced by $\sigma^*$, the sender cannot strictly benefit from disclosing any additional information and shifting the receiver's posterior to any $\pi'$ with  $\supp(\pi')\subseteq \supp(\pi)$.\footnote{Using the same constructive proof, one can show that in non-monotone environments, there exists a PEBE in which the sender's payoff is $\overline{V}(\pi_0)$ for every $p \in \Delta (\mathbb{N})$ if every $\pi \in \Delta (\Theta)$ induced by the optimal experiment $\sigma^*$ satisfies a stronger notion of credibility: $\overline{u}(\pi) \geq \underline{u}(\pi')$ for every $\pi'$ that satisfies $\supp(\pi') \subseteq \supp(\pi)$ where $\underline{u}(\pi) \equiv \min_{a \in A^* (\pi) } u^s(a)$. This stronger notion reduces to credibility in monotone environments.} 

Our proof for the second statement uses a lemma that applies beyond monotone environments: If $\sigma^*$ induces at least one non-credible belief, then for each large enough $n \in \mathbb{N}$,  
the sender cannot attain $\overline{V}(\pi_0)$ when
$p$ is close to the uniform distribution on $\{0,1,...,n\}$.
\begin{Lemma}\label{lemma:non-credible}
If $\sigma^*$ induces a non-credible belief, then there exist $\eta>0$, $n \in \mathbb{N}$, and an open set of type distributions that contains the uniform distribution on $\{0,1,...,n\}$ such that for every $p$ in this open set, the sender’s payoff in any PEBE is at most $\overline V(\pi_0)-\eta$.
\end{Lemma}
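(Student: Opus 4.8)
Fix a non-credible belief $\pi^\dagger$ that $\sigma^*$ induces, together with $\theta^\dagger\in\supp(\pi^\dagger)$ for which $\delta:=u(\theta^\dagger)-\overline u(\pi^\dagger)>0$. Note that $\pi^\dagger$ is non-degenerate, since every degenerate belief is credible. Using that $\overline u$ is upper semicontinuous (because $A^*(\cdot)$ is upper hemicontinuous), I would choose a small open neighborhood $N$ of $\pi^\dagger$ in $\Delta(\Theta)$ such that, for every $\pi\in N$, $\overline u(\pi)\le u(\theta^\dagger)-\delta/2$ and $\pi(\theta^\dagger)\ge \nu:=\pi^\dagger(\theta^\dagger)/2>0$, and such that $N$ contains no degenerate belief. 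The objects $\pi^\dagger,\theta^\dagger,\delta,\nu,N$ all depend only on the primitives.

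The engine is a single deviation, exactly the one sketched in the introduction. In an arbitrary PEBE, let $V_t$ denote the interim expected payoff of type $t$ (averaging over the realized initial experiment, its outcome, and continuation play), and let $\tilde q_t$ be the probability, conditional on the sender being type $t$, that the receiver's posterior at the terminal information set lies in $N$ and the state is $\theta^\dagger$. Type $t\ge1$ can follow type $t-1$'s strategy---feasible, as it uses at most $t-1$ additional experiments---and then run one fully revealing additional experiment, disclosing its outcome if and only if this raises the sender's continuation payoff. If the sender discloses that outcome, the disclosed record logically pins down the state, so by CPS-consistency the receiver's belief is degenerate on the true $\theta$ and the sender obtains $u(\theta)$ (Assumption \ref{Ass1}); if not, the receiver faces exactly the information set it faces against type $t-1$ and the sender obtains the same value $g_h$ there. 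Hence the deviation yields $\mathbb{E}[\max\{u(\theta),g_h\}]$ against type $t-1$'s $\mathbb{E}[g_h]$, so $V_t\ge V_{t-1}+\mathbb{E}\big[(u(\theta)-g_h)^+\big]$; since $g_h\le\overline u(\pi_h)$ and $\overline u(\pi)\le u(\theta^\dagger)-\delta/2$ on $N$, restricting the expectation to the event $\{\pi_h\in N,\ \theta=\theta^\dagger\}$ gives the per-type inequality $V_t\ge V_{t-1}+\tfrac\delta2\,\tilde q_{t-1}$, valid in every PEBE.

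Telescoping from $t=1$ to $n$ gives $V_n-V_0\ge\tfrac\delta2\sum_{j=0}^{n-1}\tilde q_j$. Because $u^s$ is bounded, $V_n-V_0\le M$ for a constant $M$, so $\sum_{j=0}^n\tilde q_j\le K:=2M/\delta+1$, a bound independent of $n$ and of the equilibrium. Therefore $\sum_j p(j)\tilde q_j\le K\max_{0\le j\le n}p(j)+p(\{j>n\})$. On path, the true-state probability conditional on the receiver's posterior equals that posterior, so $\sum_j p(j)\tilde q_j=\Pr(\pi_h\in N,\ \theta=\theta^\dagger)\ge\nu\,\Pr(\pi_h\in N)$, whence $\Pr(\pi_h\in N)\le\nu^{-1}\big(K\max_{j\le n}p(j)+p(\{j>n\})\big)$. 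Separately, the law of the receiver's on-path posterior is Bayes-plausible (mean $\pi_0$), so the sender's payoff is at most $\int\overline u\,d(\mathrm{law}\,\pi_h)\le f\big(\Pr(\pi_h\in N)\big)$, where $f(\gamma):=\sup\{\int\overline u\,d\lambda:\lambda\in\Sigma[\pi_0],\ \lambda(N)\le\gamma\}$ is nondecreasing. By Assumption \ref{Ass2} the unique optimal distribution $\lambda^*$ puts positive mass on $\pi^\dagger\in N$, so every $\lambda\in\Sigma[\pi_0]$ with $\lambda(N)=0$ is strictly suboptimal; weak$^*$ compactness of $\Sigma[\pi_0]$ with upper semicontinuity of $\overline u$ then shows $f(0)<\overline V(\pi_0)$ and $f(\gamma)\to f(0)$ as $\gamma\downarrow0$. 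Set $\eta:=\tfrac13\big(\overline V(\pi_0)-f(0)\big)>0$ and pick $\gamma^*>0$ with $f(\gamma^*)\le\overline V(\pi_0)-\eta$. Finally, choose $n$ so large that $2K/\big(\nu(n+1)\big)<\gamma^*/2$, and take the advertised open set to be a product-topology neighborhood of the uniform distribution on $\{0,\dots,n\}$ small enough that throughout it $\max_{j\le n}p(j)\le 2/(n+1)$ and $p(\{j>n\})\le\nu\gamma^*/2$; then $\Pr(\pi_h\in N)\le\gamma^*$ and the sender's payoff is at most $\overline V(\pi_0)-\eta$ in every PEBE under such $p$.

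The hard part---and the reason the argument cannot proceed type-by-type---is that the receiver does not observe the sender's type, so conditional on a fixed type the receiver's posterior need not be Bayes-plausible, and one cannot claim each type's problem replicates the commitment problem. The per-type chain inequality circumvents this by comparing a higher type's deviation payoff only to the lower type's \emph{own} equilibrium payoff, and the ex-ante identity $\sum_j p(j)\tilde q_j=\Pr(\pi_h\in N,\ \theta=\theta^\dagger)$ is what converts the accumulated gain into a statement about the receiver's belief distribution. The remaining delicate point is upgrading ``$\overline V(\pi_0)$ is not exactly attained'' to a uniform gap over an open set of type distributions, which is precisely why I work with the neighborhood $N$ rather than the single point $\pi^\dagger$ and why I need the continuity of $f$ at $0$.
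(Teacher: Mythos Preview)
Your argument is correct and uses the same core ingredients as the paper's proof: the key deviation (imitate type $t-1$, then run one fully revealing additional experiment and disclose only when profitable), the resulting telescoping bound, and the uniqueness of $\sigma^*$ to translate ``posterior mass near $\pi^\dagger$ is small'' into ``payoff is bounded away from $\overline V(\pi_0)$.'' The paper packages the last step as a separate lemma (if $\mathbb{E}_{\pi\sim\sigma}\overline u(\pi)>\overline V(\pi_0)-\eta$ then $\sigma^\theta[B(\underline\pi,r)]\ge\varepsilon$) and runs a contradiction: assume the payoff is close to commitment, deduce that each type's posterior lands near $\underline\pi$ with probability bounded below, and sum the per-type gains to exceed $\overline v-\underline v$. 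You instead argue directly: the telescoped gains are bounded by $\overline v-\underline v$, so $\sum_{j\le n}\tilde q_j\le K$ uniformly in the equilibrium and in $n$; for $p$ near the uniform this forces the ex-ante mass of posteriors in $N$ to be small, and your function $f$ (whose continuity at $0$ is exactly the contrapositive of the paper's lemma) then bounds the payoff. The two are logically equivalent contrapositives; your direct formulation via $f$ is a clean way to get the uniform gap $\eta$ and the open neighborhood in one step.
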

The proof is in Appendix \ref{secA}. The key step of our proof bounds type $t+1$ sender's payoff from below by computing their payoff from  (i) initially  using the equilibrium strategy of type $t$ and (ii) at every non-credible posterior belief $\pi$ induced by type $t$, conducting an additional experiment that fully reveals the state, and (iii) disclosing the resulting outcome if and only if the corresponding state $\theta$ satisfies $u(\theta)> \overline{u}(\pi)$. This leads to a lower bound on the difference between type $t+1$'s equilibrium payoff and type $t$'s, which is a linear function of the probability with which type $t$'s equilibrium strategy induces non-credible beliefs. Suppose by way of contradiction that there are many types and there exists an equilibrium where the sender's ex ante expected payoff is close to $\overline{V}(\pi_0)$. Then given the uniqueness of the optimal experiment, 
there will be many types inducing non-credible posterior beliefs with probability strictly above zero. Since 
the set of feasible payoffs is bounded, this will lead to a contradiction once the number of types that occur with positive probability is large enough.

Our argument for Lemma \ref{lemma:non-credible} hinges on a \textit{dispersion property} of the receiver's belief $p$ about the sender's type, that it assigns probability bounded above $0$ to sufficiently many types. In fact, in many natural settings, any sufficiently dispersed belief about the sender's type is enough to prevent the sender from attaining their commitment payoff. For each $\theta\in\Theta$, let $\Pi_{\theta}^* \subseteq \Delta (\Theta)$ denote the set of beliefs induced by $\sigma^*$ that have $\theta$ in their support. 
\begin{Theorem}\label{Theorem3}
Suppose there exists a state $\theta\in\Theta$ such that $u(\theta) > \max_{\pi\in\Pi_\theta^*}\overline{u}(\pi)$. Then
there exists a constant $\lambda>0$ such that, for every small enough $\eta>0$, there exists $n\in\mathbb{N}$  such that for any $p \in \Delta (\mathbb{N})$ assigning probability more than  $\lambda\eta$ to at least $n$ distinct types, the sender’s payoff in every PEBE is no more than $\overline{V}(\pi_0)-\eta$.
\end{Theorem}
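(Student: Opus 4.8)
The plan is to mimic the deviation argument behind Lemma \ref{lemma:non-credible}, but exploit the stronger hypothesis $u(\theta^{\dagger}) > \max_{\pi\in\Pi_{\theta^{\dagger}}^*}\overline{u}(\pi)$ for the distinguished state $\theta^{\dagger}$ to obtain a lower bound on the per-type payoff gain that is \emph{uniform across equilibria}, rather than one that depends on the probability the sender's equilibrium play induces non-credible beliefs. Fix any PEBE and suppose, toward a contradiction, that the sender's ex ante payoff is close to $\overline{V}(\pi_0)$. First I would argue that this forces the sender's on-path behavior to be close to playing $\sigma^*$ and disclosing everything, in the following quantitative sense: because $\sigma^*$ is the \emph{unique} optimal experiment (Assumption \ref{Ass2}) and $\overline{u}(\cdot)$ is bounded, there is a modulus $\delta(\eta)\to 0$ such that, in any equilibrium with ex ante payoff at least $\overline{V}(\pi_0)-\eta$, the distribution over naive beliefs generated on the equilibrium path (averaging over types and over the sender's disclosure) is within $\delta(\eta)$ of the distribution $\sigma^*$ induces. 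In particular, conditional on the realized state being $\theta^{\dagger}$, with probability at least $1-\delta'(\eta)$ the receiver's naive belief $\pi$ lies in (a neighborhood of) $\Pi_{\theta^{\dagger}}^*$, hence $\overline{u}(\pi)$ is bounded away from $u(\theta^{\dagger})$ by a gap $2\kappa>0$ that depends only on the primitives.

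Next, for each type $t$ that occurs with positive probability, I would consider the following deviation by type $t+1$ (or more generally by a type just above one that is on-path): play exactly the equilibrium strategy of type $t$ up through the disclosure stage, and then—using the one extra experiment that type $t+1$ has but $t$ does not—run a fully informative additional experiment; disclose its outcome if and only if it reveals the state to be $\theta^{\dagger}$. Since type $t$'s disclosed history already pins the receiver's naive belief, and since the PEBE refinement guarantees (via Appendix A) that after such a disclosure the receiver's posterior equals the naive belief that fully reveals $\theta^{\dagger}$, the receiver's action after disclosure is the strict best reply to $\theta^{\dagger}$, yielding the sender $u(\theta^{\dagger})$; when the outcome is not disclosed, type $t+1$ gets exactly what type $t$ would have gotten at that continuation. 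Therefore type $t+1$'s deviation payoff minus type $t$'s equilibrium payoff is at least $\Pr[\theta^{\dagger}\mid \text{type } t\text{'s on-path history lies in }\Pi_{\theta^{\dagger}}^*\text{-neighborhood}]\times\kappa$ minus a term controlled by $\delta'(\eta)$. Choosing $\eta$ small makes this gap at least some fixed $\lambda^{-1}\cdot(\text{const})$; the point is that the bound does \emph{not} degrade with $t$, because the event that the naive belief induced when the state is $\theta^{\dagger}$ lies near $\Pi_{\theta^{\dagger}}^*$ has probability near $\pi_0(\theta^{\dagger})$ for \emph{every} on-path type, again by the closeness-to-$\sigma^*$ estimate applied type by type.

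Summing these inequalities along a chain of $n$ consecutive on-path types gives $V(t+n) \ge V(t_0) + n\cdot c$ for a constant $c>0$ once $\eta$ is small, where $V(\cdot)$ denotes equilibrium interim payoff; but $V(\cdot)$ is bounded above by $\overline{V}(\pi_0)$, so $n$ cannot be too large—contradiction as soon as $p$ assigns positive probability to at least $n$ distinct types with $n > (\overline{V}(\pi_0)-\underline{V}(\pi_0))/c$. To get the quantitative statement as written—probability more than $\lambda\eta$ to at least $n$ types—I would track the constants: the gap $c$ is of order $\eta$ times the minimum type-probability (because the deviation payoff weights the continuation by how often that type is reached and how often the deviation is "triggered"), which is where the threshold $\lambda\eta$ enters, and $n$ is then of order $1/\eta$. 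The main obstacle I anticipate is making the "closeness to $\sigma^*$ implies $\overline{u}$ is bounded away from $u(\theta^{\dagger})$ conditional on $\theta=\theta^{\dagger}$" step genuinely uniform: one needs that a distribution over naive beliefs with mean $\pi_0$ and near-optimal value must, conditional on state $\theta^{\dagger}$, concentrate near $\Pi_{\theta^{\dagger}}^*$, which requires an argument that near-optimal experiments are close to $\sigma^*$ not just in value but in the induced joint distribution of (state, naive belief)—this uses Assumption \ref{Ass2} together with a compactness/upper-hemicontinuity argument, and care is needed because non-credible beliefs off the support of $\sigma^*$ could in principle appear with small probability, which is exactly what the $\delta(\eta)$ slack absorbs.
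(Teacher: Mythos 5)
Your overall architecture is the same as the paper's: a chain of deviations in which type $t+1$ imitates type $t$, then runs one fully informative extra experiment and discloses only when it reveals $\theta^{\dagger}$, with PEBE forcing the receiver's posterior to the degenerate belief after such a disclosure; telescoping the resulting payoff gains against the bound $\overline{v}-\underline{v}$ gives the contradiction. However, there is a genuine gap at the step where you claim that the concentration of naive beliefs near $\Pi_{\theta^{\dagger}}^*$ (conditional on $\theta=\theta^{\dagger}$) holds ``for every on-path type, by the closeness-to-$\sigma^*$ estimate applied type by type.'' The near-optimality of the sender's ex ante payoff only constrains the \emph{mixture} over types of the induced belief distributions; it says nothing about any individual type's conditional distribution. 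A type with small prior probability can induce beliefs far from $\Pi_{\theta^{\dagger}}^*$ without perceptibly degrading the ex ante payoff, and then your per-step gain for that link of the chain vanishes. This is exactly where the hypothesis that each of the $n$ types has probability at least $\lambda\eta$ must be used: the paper bounds $\mathbb{E}_{t\sim p}\bigl[\Pr(D\ge\eta\mid\theta,t)\bigr]<\lambda\eta/2$ (where $D$ is the Hausdorff distance to $\Pi_{\theta}^*$) and then applies Markov's inequality to conclude that the set of types with $\Pr(D\ge\eta\mid\theta,t)\ge 1/2$ has $p$-probability below $\lambda\eta$, so none of the $n$ selected types can lie in it; each selected type therefore lands near $\Pi_{\theta}^*$ with conditional probability at least $1/2$. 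Your proposal invokes $\lambda\eta$ only in the final bookkeeping, not in this transfer from an ex ante to a per-type statement, so as written the chain of inequalities is not justified.

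Relatedly, your accounting of the constants is off in a way that obscures what the stronger hypothesis buys. Once the Markov step is in place, the per-link gain is bounded below by the \emph{constant} $\tfrac{1}{2}\pi_0(\theta^{\dagger})\bigl(u(\theta^{\dagger})-\max_{\pi\in\Pi_{\theta^{\dagger}}^*}\overline{u}(\pi)\bigr)$, independent of $\eta$ and of how often the equilibrium induces non-credible beliefs --- that is precisely the point of requiring $u(\theta^{\dagger})>\max_{\pi\in\Pi_{\theta^{\dagger}}^*}\overline{u}(\pi)$ rather than mere non-credibility. Consequently $n$ can be taken as a fixed integer depending only on $(\pi_0,u^s,u^r)$, with $\eta$ then required to satisfy $\eta<1/(\lambda n)$ (and to be small enough that the neighborhood of $\Pi_{\theta^{\dagger}}^*$ stays below $u(\theta^{\dagger})$ in $\overline{u}$). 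Your version, with a gain of order $\eta$ and $n$ of order $1/\eta$, would still be consistent with the literal statement but reflects the weaker Lemma~\ref{lemma:non-credible}-style bound rather than the uniform one this theorem is designed to deliver.
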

Theorem \ref{Theorem3} requires the existence of a state
$\theta$ such that whenever the receiver's belief belongs to $\Pi_\theta^*$, the sender strictly prefers to fully reveal $\theta$, i.e., $\overline{u}(\pi) < u(\theta)$.
 In the example of the introduction, this additional requirement is satisfied when the good state occurs with probability between $0$ and $\frac{1}{3}$. This requirement is more demanding than the assumption in Lemma \ref{lemma:non-credible} that  $\sigma^*$ induces at least one non-credible belief, since all beliefs in $\Pi^*_\theta$ are non-credible. Appendix~\ref{secC} provides an example showing that this stronger requirement cannot be weakened to the one in Lemma \ref{lemma:non-credible}.

The proof of Theorem \ref{Theorem3}, in Appendix~\ref{secB},   parallels that of Lemma \ref{lemma:non-credible}:
We again examine a higher-type sender's payoff from imitating the equilibrium strategy 
of a lower type and then conducting an additional experiment 
that fully reveals the state, disclosing the result if and only if it yields 
a strictly higher payoff. 
When $u(\theta)>\max_{\pi\in\Pi_\theta^*}\overline{u}(\pi)$,
the higher type can guarantee a strictly higher payoff than any lower type’s
equilibrium payoff, regardless of the probability that the lower type’s strategy
induces non-credible beliefs.  
The contrapositive—namely, there exists an equilibrium yielding
a sender payoff arbitrarily close to the commitment payoff—must be false 
 once sufficiently many types occur with positive probability, which completes the argument.

Next, we explain why the receiver having a dispersed belief about the sender's type is not redundant. We first present an example where the sender can attain their commitment payoff in some equilibrium even when (i) the optimal experiment induces a non-credible belief and (ii) the receiver faces non-trivial uncertainty about the sender's type. We then establish Theorem \ref{Theorem2}, which shows that when $p$ is concentrated on a single type, there always exists an equilibrium in which the sender’s payoff is close to $\overline{V}(\pi_0)$.

\paragraph{Example:} Suppose $A \equiv \{0,2,3,\frac{7}{2}\}$, $\Theta \equiv \{\theta_0,\theta_1\}$, and $t \in \{0,1\}$. We will use the probability of $\theta_1$ to represent the receiver's belief about the state. 
The sender's payoff equals $a$. The receiver's payoff is such that $a=0$ is optimal when $\pi \in [0,\frac{1}{4}]$, $a=2$ is optimal when $\pi \in [\frac{1}{4},\frac{1}{2}]$, $a=3$ is optimal when $\pi \in [\frac{1}{2},\frac{3}{4}]$, and $a=\frac{7}{2}$ is optimal when $\pi \in [\frac{3}{4},1]$. 
Suppose  $\pi_0 = \frac{3}{8}$ and $t=0$ with probability $\frac{1}{3}$.\footnote{More generally, for every $\pi_0 \in (\frac{1}{4}, \frac{1}{2})$, there is a non-degenerate distribution on $t=\{0,1\}$ and a corresponding equilibrium in which the sender attains $\overline{V}(\pi_0)$.} We construct an equilibrium in which the sender's expected payoff is $\overline{V}(\pi_0) = \frac{5}{2}$, which is attained by inducing posterior beliefs $\frac{1}{4}$ and $\frac{1}{2}$ with equal probability. 

In this equilibrium, the sender chooses an uninformative initial experiment. Type $1$ conducts an additional experiment with outcome $\overline{s}$ occurring with probability $1$ when $\theta=\theta_1$ and probability $\frac{3}{5}$ when $\theta=\theta_0$, and discloses only $\overline{s}$. The receiver's posterior belief equals $\frac{1}{2}$ after observing $\overline{s}$ and $\frac{1}{4}$ after observing no additional signal, matching their naive belief at all off-path information sets.

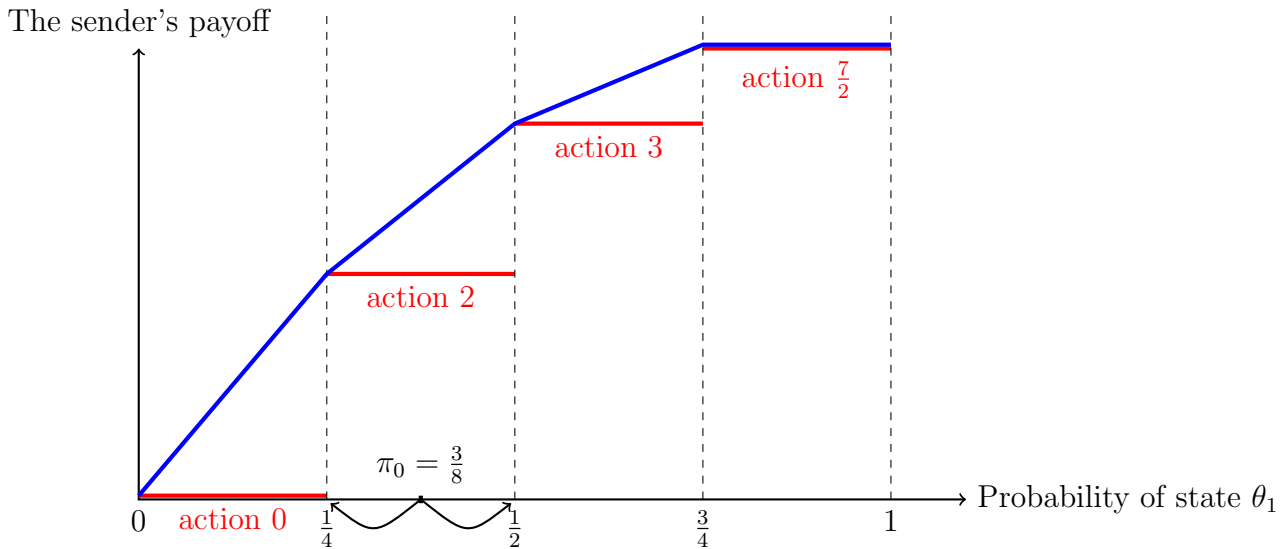
\begin{figure}
\begin{center}
\begin{tikzpicture}[scale=0.5]
\draw[thick, <->] (0,12)node[above]{The sender's payoff}--(0,0)node[below]{$0$}--(22,0)node[right]{Probability of state $\theta_1$};
\draw[ultra thick, red] (0,0.1)--(2.5,0.1)node[below]{action $0$}--(5,0.1);
\draw[ultra thick, red] (5,6)--(7.5,6)node[below]{action $2$}--(10,6);
\draw[ultra thick, red] (10,10)--(12.5,10)node[below]{action $3$}--(15,10);
\draw[ultra thick, red] (15,12)--(17.5,12)node[below]{action $\frac{7}{2}$}--(20,12);
\draw[dashed] (5,0)node[below]{$\frac{1}{4}$}--(5,13);
\draw[dashed] (10,0)node[below]{$\frac{1}{2}$}--(10,13);
\draw[dashed] (15,0)node[below]{$\frac{3}{4}$}--(15,13);
\draw[dashed] (20,0)node[below]{$1$}--(20,13);
\draw[ultra thick, blue] (0,0.1)--(5,6)--(10,10)--(15,12.1)--(20,12.1);
\draw[ultra thick] (7.5,-0.1)--(7.5,0.1)node[above]{$\pi_0=\frac{3}{8}$};
\draw[->, thick]
    (7.5,0) .. controls (6.25,-1) and (6.25,-1) .. (5.1,-0.1);
\draw[->, thick]
    (7.5,0) .. controls (8.75,-1) and (8.75,-1) .. (9.9,-0.1);    
\end{tikzpicture}
\caption{The sender's indirect utility as a function of the receiver's belief $\overline{u}(\pi)$ (\textcolor{red}{red}) and its concave closure (\textcolor{blue}{blue}) in the example.}
\end{center}
\end{figure}

In this example, type $1$ has no incentive to fully reveal state $\theta_1$ via additional experiments. Instead, they prefer to disclose a signal 
that partially reveals the state and induces posterior belief $\frac{1}{2}$. When the receiver only assigns positive probability to types $0$ and $1$, this leads to an equilibrium where the sender obtains their commitment payoff. 
Nevertheless, our construction relies on the assumption that there are only two sender types. As shown in Theorem \ref{Theorem3}, no such equilibrium exists when the sender's type distribution is sufficiently dispersed.

Next, we show that if the
type distribution $p$ assigns probability sufficiently close to $1$ to one type,  there is an equilibrium where the sender's payoff is close to $\overline{V}(\pi_0)$.
\begin{Theorem}\label{Theorem2}
For every $\delta>0$, there exists $\varepsilon>0$ such that for every $p \in \Delta (\mathbb{N})$
where $p(n)>1-\varepsilon$ for some $n \in \mathbb{N}$, there exists a PEBE in which the sender's payoff is more than $\overline{V}(\pi_0)-\delta$.  
\end{Theorem}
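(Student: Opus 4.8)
The plan is to build the desired equilibrium by perturbing the equilibrium for the degenerate distribution $\delta_n$ furnished by (the proof of) Proposition~\ref{Prop1}, exploiting the fact that the non-dominant types together carry probability at most $\varepsilon$. We may assume $\overline V(\pi_0)-\underline V(\pi_0)\ge\delta$, since otherwise the profile in which the sender fully discloses the state in the initial experiment and runs no further experiments is a PEBE with payoff $\underline V(\pi_0)>\overline V(\pi_0)-\delta$. By part (1) of Proposition~\ref{Prop1} every type's payoff lies in $[\underline V(\pi_0),\overline V(\pi_0)]$ in any PEBE, so it suffices to construct a PEBE in which type $n$'s payoff exceeds $\overline V(\pi_0)-\delta/2$: the ex ante payoff is then at least $(1-\varepsilon)(\overline V(\pi_0)-\delta/2)+\varepsilon\underline V(\pi_0)$, which exceeds $\overline V(\pi_0)-\delta$ once $\varepsilon$ is small.

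I first pin down the dominant type's play and the receiver's play. Pick an experiment $\tau^*$ close to $\sigma^*$ that induces a strict receiver best reply at each of its (finitely many) induced beliefs—this exists by Assumption~\ref{Ass1} and Lemma~\ref{L1}—with $\sum_\pi\tau^*[\pi]\overline u(\pi)>\overline V(\pi_0)-\delta/3$ and with every induced belief occurring with probability bounded away from $0$. If $n=0$ take $\sigma_0=\tau^*$; if $n\ge1$ take $\sigma_0$ uninformative and have type $n$ run $n-1$ uninformative additional experiments and then $\tau^*$, disclosing all $n$. The receiver plays the unique (sender-preferred) best reply to her belief; off-path beliefs are as unfavorable to the sender as consistency with a conditional probability system allows, which in particular means that whenever a revealing experiment has been concealed the receiver believes the state minimizes $u(\cdot)$ over the relevant support—exactly the device used in the proof of Proposition~\ref{Prop1}. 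Every other type then best responds: at a credible induced belief $\pi$ it imitates type $n$ as far as its capacity permits, and at a non-credible induced belief $\pi$ it runs a fully revealing experiment and discloses the outcome precisely when the realized $\theta$ has $u(\theta)>\overline u(\pi)$, concealing it otherwise.

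The core is to show that type $n$ secures more than $\overline V(\pi_0)-\delta/2$ and has no profitable deviation. Along the path type $n$ reaches information sets at which nothing informative is concealed; since type $n$ has probability $\ge1-\varepsilon$ and reaches each such information set with probability bounded away from $0$ (by the choice of $\tau^*$), while all other types together contribute probability at most $\varepsilon$, the receiver's belief there is within $O(\varepsilon)$ of the naive belief $\pi$, and—$\tau^*$ having a strict-incentive margin independent of $p$—for $\varepsilon$ small the receiver plays $a^*(\pi)$, giving type $n$ payoff $\overline u(\pi)$ and total $\sum_\pi\tau^*[\pi]\overline u(\pi)>\overline V(\pi_0)-\delta/3$. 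The deviation analysis adapts the proof of Proposition~\ref{Prop1} together with the continuity property implied by Assumption~\ref{Ass1}: a deviation that discloses everything is bounded through optimality of $\sigma^*$ (its outcomes land either at naive beliefs or at unfavorable off-path beliefs); and a deviation that conceals some of what type $n$ has learned lands at an information set whose equilibrium belief is unfavorable to the sender—it is off the path, or on the path only for a lower-capacity type running the reveal-favorable-conceal-rest strategy, whose concealed revealing experiment drags the belief toward the low-$u$ states—so that the most attractive such deviation, ``reveal the favorable state and conceal the rest'', is worth only $\underline V(\pi)=\sum_\theta\pi(\theta)u(\theta)$, and $\underline V(\pi)\le\overline u(\pi)$ for every $\pi$ induced by $\sigma^*$ because splitting $\pi$ into degenerate beliefs cannot improve on $\sigma^*$. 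The non-dominant types best respond by construction, and a standard consistency/fixed-point argument assembles these strategies with a system of receiver beliefs into a PEBE.

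I expect the hard part to be the concealment-deviation check at non-credible induced beliefs: one must guarantee that every information set type $n$ can reach by concealing a revealing experiment carries a belief unfavorable to the sender, which is subtle precisely because such an information set can lie on the path of a lower-capacity type. This is why the lower-capacity types are themselves made to conceal revealing experiments at non-credible beliefs—so that the belief they generate there concentrates on the low-$u$ states rather than on $\pi$—and why, when $n\ge1$, the initial experiment is taken uninformative, so that the capacity-$0$ type (which cannot run a revealing experiment and would otherwise sit at $\pi$ on such an information set) is parked at the prior posterior $\pi_0$ instead.
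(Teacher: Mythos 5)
Your blueprint matches the paper's at a high level (perturb $\sigma^*$ via Lemma \ref{L1}, have the dominant type $n$ exhaust its capacity so it cannot selectively disclose, give off-path information sets skeptical beliefs, let the rare types selectively disclose at non-credible beliefs), but there is a genuine gap at precisely the point your last paragraph flags and then waves away: the concealment information sets of types $1,\dots,n-1$ are \emph{on path}, so the receiver's belief there is pinned down by Bayes rule, not by skepticism, and your prescribed behavior is generally not a best reply to that belief. Concretely, your rule ``disclose iff $u(\theta)>\overline u(\pi)$'' is a threshold at $u^{*}=\overline u(\pi)$; for it to be optimal, the sender's payoff $v$ from the receiver's best reply at the concealment information set must lie between the largest concealed $u(\theta)$ and the smallest disclosed $u(\theta)$, and nothing forces this when several states are concealed or when a capacity-exhausted type pools there. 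Your device of an uninformative $\sigma_0$ only relocates the problem the paper attributes to type $0$: type $1$ (having spent its single experiment on $\tau^*$) sits at the same information set as type $2$'s concealment with its full posterior $\pi$, playing exactly the role type $0$ plays in the paper's construction, so the belief there can be close to $\pi$ rather than ``dragged toward the low-$u$ states,'' and your bound of the concealment deviation by $\underline V(\pi)$ fails. Worse, this breaks type $n$'s incentives, which you do not check at all: type $n$ (capacity $\ge 2$) can deviate to running $\tau^*$ first, then a fully revealing experiment, disclosing the favorable outcomes and concealing otherwise; if the concealment information set's belief is near $\pi$ (the case where type $1$, or type $0$, is relatively likely among the rare types), this deviation yields strictly more than $\overline u(\pi)$ at every non-credible $\pi$, so the assessment you describe is simply not an equilibrium.

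This is exactly what the paper's Appendix \ref{secD} is mostly about: the disclosure threshold cannot be fixed a priori at $\overline u(\pi)$ but must be chosen via the indifference values $u_i^*$ in (\ref{indifferencetypen}) and the local-minimum characterization of Lemmas \ref{L2}--\ref{L3}, and even then the Bayes-consistent non-disclosure belief need not deliver the required payoff, which forces the two-case construction: when type $0$ (your type $1$) is relatively likely, type $n$ must \emph{mix} into imitating the lower types to depress the non-disclosure belief until it is indifferent (Case A, using the closed connected graph $\mathbb{G}$); when it is relatively unlikely, the lower types mix between adjacent monotone disclosure rules, with an intermediate-value argument on $\mathbb{U}$ and $\mathbb{V}$ and a separate verification that type $n$ does not want to imitate (Case B, via monotonicity of $u^*(\tau,\beta)$). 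Your closing appeal to ``a standard consistency/fixed-point argument'' is not a substitute for this: the fixed point being computed is exactly the joint determination of the threshold, the mixing probabilities, and the receiver's tie-breaking at the non-disclosure belief, and without it neither the lower types' rule nor type $n$'s abstention from selective disclosure is incentive compatible.
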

Our three theorems together imply that, when payoffs are monotone, whenever the optimal experiment induces some non-credible beliefs, whether the sender can attain their commitment payoff in their optimal equilibrium is governed by the dispersion of the receiver’s belief about their type. If this belief is sufficiently concentrated on a single type (so that the receiver faces little uncertainty about the sender’s capacity), then there exist equilibria in which the sender’s payoff is approximately $\overline{V}(\pi_0)$. By contrast, if the receiver’s belief assigns positive probability to many sender types and each such type occurs with probability bounded above $0$, then the sender’s equilibrium payoff is bounded strictly below their commitment payoff.

Theorems \ref{Theorem3} and Theorem \ref{Theorem2}  are complementary, as their proofs lead to lower and upper bounds on the gap between the sender's commitment payoff and their highest equilibrium payoff. The proof of Theorem \ref{Theorem3} implies that the gap is bounded below by a linear function of the probability of the $n$th most likely type, where $n$ is a function of the sender's benefit from selectively disclosing additional information at non-credible beliefs, the prior belief about the state, and the probability with which the optimal experiment inducing those non-credible beliefs.  
Theorem \ref{Theorem2} provides an upper bound on that gap, which is a linear function of the probability of types other than the most likely one.

The proof of Theorem \ref{Theorem2}, which is in Appendix \ref{secD}, constructs equilibria
that approximately attain the sender's commitment payoff. For some intuition, consider the simple case where type $0$ occurs with  probability zero. In our construction, the sender chooses an uninformative initial experiment and (i) when their realized type is between $1$ and $n-1$, they conduct one  additional experiment that reveals the state and fully disclose the outcome, (ii) when their realized type is at least $n$, they conduct an additional experiment that is close to $\sigma^*$, fully disclose its outcome, as well as the outcome of $n-1$ uninformative experiments, (iii) when their realized type is above $n$, they may conduct more than $n$ additional experiments and selectively disclose their outcomes, and (iv) at every off-path information set, the receiver's posterior belief assigns probability $1$ to the worst state in the support of their naive belief, i.e., the state $\theta$ that minimizes $u(\theta)$. The sender's ex ante expected payoff is close to $\overline{V}(\pi_0)$ given that type $n$ occurs with probability close to $1$. 

This   construction does not work  when type $0$ occurs with positive probability,   because the receiver will then assign probability $1$ to type $0$ when no additional outcome is disclosed, which  can induce types $1$ to $n-1$ not to disclose the  outcome of the fully informative additional experiment.  Our proof adjusts the strategies of types $1$ to $n$ in order to accommodate this. In equilibrium, the sender chooses an experiment close to $\sigma^*$ as their initial experiment. The additional experiments conducted by each type as well as the outcomes  disclosed depend on the relative probabilities of type $0$ and types $1$ to $n-1$. 
When type $0$ is relatively likely compared to types $1$ to $n-1$, type $n$ 
will imitate these lower types  with positive probability, which lowers the lower-types' equilibrium payoffs. 
 When type $0$ is relatively unlikely compared to types $1$ to $n-1$, types $1$ to $n-1$ conduct an additional experiment that fully reveals the state and disclose if and only if their indirect utility under that state is high enough.

\section{Discussion}\label{sec4}
We first present  an example that illustrates how the off-path restrictions of PEBE, which are absent in wPBE, 
force  the sender’s equilibrium payoff below their commitment payoff.  We then examine the sender's worst equilibrium payoff, complementing our earlier analysis of the best equilibrium. Finally, we consider extensions where the sender observes their type before choosing the initial experiment or incurs a cost for each additional experiment.

\paragraph{PEBE \textit{vs} wPBE:} 
Consider the following example: $A = \{0,2,3\}$, $\Theta = \{\theta_0,\theta_1\}$, $t \in \{0,1\}$. The receiver's prior  assigns probability strictly between $0$ and $\frac{1}{3}$ to state $\theta_1$ and $\frac{1}{2}$ to type $t=1$. The sender's payoff equals $a$. 
Let $\pi \in [0,1]$ denote the probability the receiver's posterior belief assigns to state $\theta_1$. 
We assume that it is optimal for the receiver to choose action $0$
when 
$\pi \in [0,\frac{1}{3}]$, action $2$ when $\pi \in [\frac{1}{3},\frac{2}{3}]$, and action $3$ when $\pi \in [\frac{2}{3},1]$.

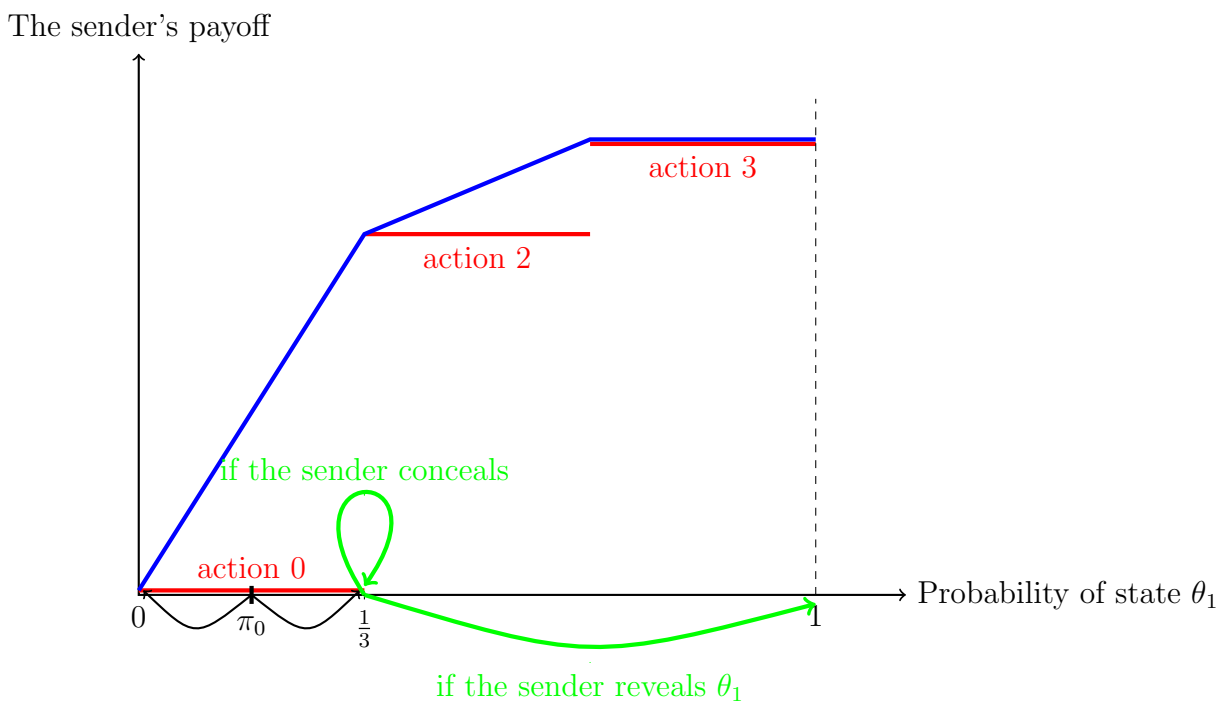
\begin{figure}[h]
\begin{center}
\begin{tikzpicture}[scale=0.6]
\draw[thick, <->] (0,12)node[above]{The sender's payoff}--(0,0)node[below]{$0$}--(17,0)node[right]{Probability of state $\theta_1$};
\draw[ultra thick, red] (0,0.1)--(2.5,0.1)node[above]{action $0$}--(5,0.1);
\draw[ultra thick, red] (5,8)--(7.5,8)node[below]{action $2$}--(10,8);
\draw[ultra thick, red] (10,10)--(12.5,10)node[below]{action $3$}--(15,10);
\draw[dashed] (5,-0.1)node[below]{$\frac{1}{3}$}--(5,0.1);
\draw[dashed] (15,0)node[below]{$1$}--(15,11);
\draw[ultra thick, blue] (0,0.1)--(5,8)--(10,10.1)--(15,10.1);
\draw[ultra thick] (2.5,0.2)--(2.5,-0.2)node[below]{$\pi_0$};
\draw[->, thick] (2.5,0) .. controls (1.25,-1) and (1.25,-1) .. (0.1,0.1);
\draw[->, thick]
    (2.5,0) .. controls (3.75,-1) and (3.75,-1) .. (4.9,0.1);
\draw[->, ultra thick, green]
    (5,0) .. controls (10,-1.5) and (10,-1.5) .. (15,-0.2);
\draw[green] (10,-1.49)--(10,-1.51)node[below]{if the sender reveals $\theta_1$};
\draw[->, ultra thick, green] (5,0) .. controls (3,3) and (7,3) .. (5,0.2);
\draw[green] (5,2.2)--(5,2.3)node[above]{if the sender conceals};
\end{tikzpicture}
\caption{The sender's payoff function $\overline{u}(\pi)$ (\textcolor{red}{red}) and its concave closure (\textcolor{blue}{blue}). The green arrows represent type $1$ sender's profitable deviation after observing the outcome of the initial experiment that leads to naive belief $\frac{1}{3}$.}
\end{center}
\end{figure}

We show that the sender's payoff is strictly below $\overline{V}(\pi_0)$ in every PEBE. Suppose by contradiction that there is a PEBE where the sender's payoff equals $\overline{V}(\pi_0)$. Then the receiver's posterior belief must be $\frac{1}{3}$ conditional on state $\theta_1$ for each sender type. Since both types occur with positive probability, there exists $h \in \mathcal{H}_1$ at which the receiver's posterior belief is $\frac{1}{3}$. But type $1$ then has a strict incentive to conduct a fully informative experiment at $h$ and reveal the outcome if and only if $\theta=\theta_1$, which increases the receiver's action from $2$ to $3$ in state $\theta_1$ while leaving it unchanged in state $\theta_0$. Therefore, there is no PEBE where the sender attains their commitment payoff $\overline{V}(\pi_0)$ (See Figure 3).  

The above argument exploits a key difference between PEBE and wPBE. Under PEBE, whenever the sender discloses evidence revealing state $\theta$ at any information set (on or off the equilibrium path), the receiver's posterior belief must assign probability $1$ to $\theta$. This restriction need not hold under wPBE.\footnote{In addition to the restriction just discussed, PEBE imposes further constraints that wPBE does not. For instance, when the sender's type is bounded above by $t$, the receiver's posterior belief about $\theta$ at any information set where $t$ additional experiment outcomes are revealed must equal the receiver's naive belief.} This is what allows \citet{st2025} to show that the sender obtains their commitment payoff in some wPBE for a class of  payoff environments that includes all cases with transparent motives.

\paragraph{The Sender's Worst Equilibrium Payoff:}
 Our next result shows that, when the sender's type is unbounded, there is an equilibrium in which the sender attains their full-disclosure payoff $\underline{V}(\pi_0)$, the lower bound on their equilibrium payoff established in Proposition~\ref{Prop1}.
\begin{Proposition}\label{Prop5}
If $p \in \Delta (\mathbb{N})$ assigns positive probability to infinitely many types, then there is a PEBE in which the sender fully reveals $\theta$ and conducts no additional experiments on the equilibrium path. 
\end{Proposition}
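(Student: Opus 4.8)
The plan is to exhibit an explicit PEBE and verify its requirements, the only non‑routine ingredient being the specification of off‑path beliefs; the equilibrium will deliver exactly $\underline{V}(\pi_0)$, the lower bound from Proposition~\ref{Prop1}. Let $\sigma_0^{FR}$ be a fully revealing initial experiment, which exists since $\Theta$ is finite and $S$ is countably infinite. Prescribe the following: the sender chooses $\sigma_0^{FR}$ and, for every type, conducts no additional experiments and discloses nothing; at each information set $h$ the receiver plays the best reply (unique by Assumption~\ref{Ass1}) to the belief placing probability one on some state $\underline{\theta}_h\in\arg\min_{\theta\in\supp(\widehat{\pi}_h)}u(\theta)$, ties broken in a fixed way. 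On the equilibrium path $\widehat{\pi}_h$ is degenerate at the realized state, so the receiver's action is its full‑disclosure best reply and the sender earns $u(\theta)$ in state $\theta$, i.e.\ $\underline{V}(\pi_0)$ in expectation; off path the receiver is thereby made maximally skeptical.

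The receiver is sequentially rational by construction, so it remains to rule out profitable sender deviations, and here the argument collapses to a one‑line bound. Fix any sender strategy and any realized state $\theta$; the induced information set $h$ always has $\theta\in\supp(\widehat{\pi}_h)$, and the receiver's belief there is degenerate on some $\psi(h)\in\supp(\widehat{\pi}_h)$ with $u(\psi(h))\le u(\theta)$ --- in the skeptical case $\psi(h)=\underline{\theta}_h$ minimizes $u$ over $\supp(\widehat{\pi}_h)\ni\theta$, and otherwise $\widehat{\pi}_h$ is degenerate, forcing $\psi(h)=\theta$. So the sender's continuation payoff never exceeds $u(\theta)$, whence from the initial node every strategy yields at most $\sum_{\theta}\pi_0(\theta)u(\theta)=\underline{V}(\pi_0)$, which the prescribed strategy attains; applying the same inequality at each later node, with the sender's current belief over $\theta$ in place of $\pi_0$, shows the prescribed continuation is optimal there as well (on path, doing nothing; off path, any continuation attaining the bound --- for instance, running a fully revealing experiment when capacity remains and disclosing its outcome only when that strictly raises the payoff). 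Both players are therefore sequentially rational.

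The main obstacle is to show that this skeptical belief system is consistent with a conditional probability system, and this is precisely where the assumption that $p$ has infinite support is indispensable. One constructs the CPS as a limit of Bayes‑consistent beliefs derived from a hierarchy of vanishing sender trembles, ordered so that at every off‑path $h$ disclosing $k$ additional experiments the conditional belief concentrates on $\underline{\theta}_h$. This is feasible because a sender of type at least $k+1$ can reach $h$ through exactly those $k$ disclosures and then run one more, fully revealing, experiment whose outcome is suppressed precisely when the state equals $\underline{\theta}_h$; since $p(\{t\ge k+1\})>0$ for every $k$, such types have positive prior probability, and the trembles can be arranged so that both (i) routes through types $t\le k$, which are already at capacity at $h$ and would otherwise pull the belief toward $\widehat{\pi}_h$, and (ii) routes that reach and then suppress a state other than $\underline{\theta}_h$, are of strictly higher order; verifying that a single ordering works uniformly over $\mathcal{H}$ is bookkeeping‑heavy but routine. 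Infinite support is genuinely needed: the naive‑belief property of PEBE (footnote~\ref{fn7}) binds only for bounded types, so under a support $t\le t^*$ the receiver's posterior at any $h\in\mathcal{H}_{t^*+1}$ is forced to equal $\widehat{\pi}_h$, skepticism is unavailable there, and the sender could then profitably deviate by disclosing $t^*$ uninformative additional experiments to pin the receiver's belief at $\pi_0$ and collect (essentially) $\overline{u}(\pi_0)$, which in general strictly exceeds $\underline{V}(\pi_0)$.
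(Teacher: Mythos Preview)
Your proof is correct and takes essentially the same approach as the paper: a fully revealing initial experiment, receiver beliefs degenerate on the worst state in the support of the naive belief at every information set, and off-path sender behavior of running one more fully revealing experiment and concealing the outcome precisely when it is the worst state. Your exposition adds explicit detail on the CPS construction via trembles and on why unbounded support is indispensable, both of which the paper's terse proof leaves to the reader with ``one can verify that this is a PEBE.''
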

In settings where the receiver's belief assigns probability close to $1$ to one sender type but also assigns strictly positive but small probability to infinitely many types, Theorem \ref{Theorem2} and Proposition \ref{Prop5} together  imply that there are equilibria in which the sender approximately attains their commitment payoff although there are also equilibria in which the sender obtains their full disclosure payoff. Hence, relative to settings where the receiver knows the sender's type in which case the sender obtains their commitment payoff in all equilibria, the sender's welfare is sensitive to the equilibrium players coordinate on once the sender's type can be arbitrarily large, even when
the receiver is almost certain about their type.  

Our proof of Proposition \ref{Prop5} constructs an equilibrium that is reminiscent of the fully revealing equilibrium in Matthews and Postlewaite (1985), where the receiver forms skeptical beliefs at all information sets where the state is not fully revealed: At every information set $h \in \mathcal{H}$, the receiver's posterior belief will assign probability $1$ to the worst state  in the support of their naive belief $\widehat{\pi}_h$ (i.e., the state under which the receiver's unique best reply minimizes the sender's payoff). When the sender's type distribution $p$ is unbounded, no matter how many additional experiments the sender has revealed, the receiver will always find it plausible that the sender has concealed the outcome of at least one additional experiment and that this outcome reveals the worst state in the support of the receiver's naive belief.

\begin{proof}[Proof of Proposition 2:]
Let $\Theta \equiv \{\theta_1,\theta_2,...,\theta_n\}$. Without loss of generality, we assume that $u(\theta_n) \geq ... \geq u(\theta_1)$. Consider a strategy profile and belief in which the sender chooses an initial experiment that fully reveals the state and conducts no additional experiments on the equilibrium path. For every $k \geq 1$ and  every off-path history $\tilde{h}_{k} \in \mathcal{H}_k$, let $i(\tilde{h}_{k})$ denote the lowest $i$ such that $\theta_i \in \supp(\widehat{\pi}_{\tilde{h}_k})$.  Sender types $t \geq k$ will conduct one fully informative additional experiment at $\tilde{h}_k$ and disclose the outcome unless the state is $\theta_{i(\tilde{h}_k)}$. 
At every off-path information set $\tilde{h}_{k} \in \mathcal{H}_k$, the receiver's posterior belief assigns probability $1$ to state $\theta_{i(\tilde{h}_k)}$, namely, the worst state for the sender in the support of the receiver's naive belief at that information set. One can verify that this is a PEBE in which the state is fully revealed to the receiver on the equilibrium path and the sender's expected payoff is $\underline{V}(\pi_0)$.
\end{proof}

\paragraph{The Sender Observes Their Type Before Conducting the Initial Experiment:}
In the baseline model, the sender observes their type $t$ after choosing the initial experiment. We now show that all our results extend when the sender observes $t$ before choosing the initial experiment, provided we focus on the sender's ex ante expected payoff.

The argument that Proposition \ref{Prop1} extends  has three steps. First, regardless of when the sender observes their type, they can secure their full-disclosure payoff by committing to an initial experiment that fully reveals $\theta$, making subsequent receiver skepticism irrelevant. Second, in equilibrium, the sender's strategy induces a distribution over receiver beliefs with expectation $\pi_0$, so the sender's ex ante expected payoff in any equilibrium cannot exceed $\overline{V}(\pi_0)$. Third, when $p$ is degenerate at type $t$, the sender can approach $\overline{V}(\pi_0)$ by choosing an initial experiment $\sigma^{\varepsilon}$ close to $\sigma^*$ (where the receiver has strict incentives at each induced belief) and conducting $t$ uninformative additional experiments and disclosing all the outcomes.

 Theorem \ref{Theorem3} and
the second statement of Theorem \ref{Theorem1} are  shown via bounding the difference between a higher-type and a lower-type sender's equilibrium payoffs, by considering the higher-type's payoff from a deviation that initially uses the lower type's equilibrium strategy, after which they conduct an additional experiment and selectively disclose the resulting outcome. Our argument does not rely on the assumption that different types conducting the same initial experiment, which implies that our conclusions extend when the sender observes their type $t$ before choosing $\sigma_0$.

Theorem \ref{Theorem2} and
the first statement of Theorem \ref{Theorem1}  extend since there still exists an equilibrium where all types of the sender choose the same initial experiment. Given that every sender type in our baseline model obtains at least  their full disclosure payoff, they will choose to conduct the same initial experiment if the receiver has skeptical beliefs at any information set where the sender deviates to a different initial experiment.

Proposition \ref{Prop5} also extends, since when $p$ assigns positive probability to infinitely many types, there is still an equilibrium in which (i) all types of the sender choose an initial experiment that fully reveals the state and (ii) the receiver's posterior belief at every information set assigns probability $1$ to the state in the support of their naive belief that minimizes the sender's payoff.

\paragraph{Constant Marginal Cost of Additional Experiments:} 
Suppose the sender has a publicly known constant marginal cost $c$ per additional experiment. When $c$ is sufficiently small, our main results extend.  Specifically, the conclusions of  Propositions \ref{Prop1} and \ref{Prop5} hold because the sender can secure their full disclosure payoff by choosing a fully informative initial experiment; and 
when $p$ is unbounded, they can obtain their full disclosure payoff in equilibrium since
choosing a fully informative initial experiment and conducting no additional experiments is an equilibrium as long as the receiver entertains skeptical beliefs. For any degenerate $p$ and $\varepsilon>0$, there exists $\overline{c}>0$ such that when $c<\overline{c}$, the sender's payoff is more than $\overline{V}(\pi_0)-\varepsilon$ in all equilibria. This is because when $c$ is small enough, the sender can secure a payoff that is $\varepsilon$-close to $\overline{V}(\pi_0)$ by choosing
an initial experiment close to $\sigma^*$ where the receiver has strict incentives at each induced belief (which exists by Assumption \ref{Ass1}), conducting $t$ uninformative additional experiments,
and fully disclosing all the outcomes. 
 
The first statement of Theorem \ref{Theorem1} extends to all $c>0$ since our constructive proof does not require the sender to conduct additional experiments on the equilibrium path. The second statement of Theorem \ref{Theorem1} and Theorem \ref{Theorem3} extend when $c$ is small relative to the bounds on the payoff difference between higher-type and lower-type senders. These bounds depend only on the payoff environment $(\pi_0,u^s,u^r)$ and the gap between the sender's equilibrium and commitment payoffs.

 The analog of Theorem \ref{Theorem2} also holds: For every $\delta>0$, 
there exists $\varepsilon>0$ such that for every 
$c< \varepsilon$ and 
 $p \in \Delta (\mathbb{N})$
where $p(n)>1-\varepsilon$ for some $n \in \mathbb{N}$, there exists a PEBE in which the sender's payoff is more than $\overline{V}(\pi_0)-\delta$.

\section{Conclusion}\label{sec5}
We study a novel form of limited commitment, where the sender   commits to an initial experiment  but cannot commit not to conduct additional experiments and  selectively disclose their outcomes. We show that the sender’s ability to attain their commitment payoff depends on two key factors: (i) the precision of the receiver's information about the sender’s ability to conduct additional experiments and 
(ii) whether the sender can benefit from selectively disclosing additional information at beliefs induced by the optimal experiment. When the sender's optimal experiment induces at least one belief under which the sender can benefit from revealing more information (i.e., a non-credible belief) and the receiver is sufficiently uncertain about the sender’s ability to conduct additional experiments, the sender cannot obtain any payoff close to their commitment payoff in any equilibrium. If the environment is monotone and all beliefs induced by $\sigma^*$ are credible, the sender can exactly attain their commitment payoff in some equilibrium; if the receiver is nearly certain of the sender's type, there is an equilibrium where the  sender gets  approximately their commitment payoff.

\clearpage
\appendix
\appendixpage

\section{PEBE in Games with Random Successors}\label{PEBE}
In this Appendix, we extend the definition of PEBE to games where the termination and successor nodes are random and are affected by players' actions.

Consider an extensive-form game with nodes $x \in X$, information sets $h \in H$, actions $a \in A(h)$ available at information set $h$, and signals $s\in S$. We use $h(x)$ to denote the information set that contains $x$. For every $x \in X$ and $a \in A(h(x))$, signal $s \in S$ is realized with probability $q(s \mid x, a)$, leading to successor node $(x,a,s) \in X$. Let $Z$ denote the set of terminal nodes. Given $Y \subseteq X$, we use $Z(Y)$ to denote the subset of terminal nodes that are preceded by some element in $Y$.

A conditional probability system (CPS) on $Z$ is a function $\nu(\cdot \mid \cdot): 2^Z \times 2^Z \setminus \{\emptyset\} \to [0,1]$ such that (i) for all non-empty $\Omega_0 \subseteq Z$, $\nu(\cdot \mid \Omega_0)$ is a probability distribution on $\Omega_0$ and (ii) for all $\Omega_3 \subseteq \Omega_2 \subseteq \Omega_1 \subseteq \Omega$ with $\Omega_2 \neq \emptyset$, we have $\nu(\Omega_3\mid\Omega_2)\, \nu(\Omega_2\mid\Omega_1) =\nu(\Omega_3\mid\Omega_1)$. A CPS $\mu$ on $Z$ induces conditional beliefs $\mu(\cdot \mid \cdot)$: $\mu(A \mid B) \coloneqq \nu(Z(A) \mid Z(B))$ for every $A,B \subseteq X$ such that $Z(A) \subseteq Z(B)$.

Given a profile of strategies $\pi$, let $\pi(a\mid h)$ denote the induced probability of taking action $a \in A(h)$ at information set $h$.
We say a profile of strategies $\pi$ and a CPS $\nu$ on terminal nodes is an extended assessment. 

\begin{Definition}
    Let $\mu$ denote the conditional beliefs associated with $\nu$.
    The extended assessment $(\nu,\pi)$ is a PEBE if 
    \begin{enumerate}
    \item $\pi$ is a best response to conditional beliefs $\mu$.
    
    \item For every information set $h \in H$, action $a \in A(h)$, signal $s \in S$, and node $x \in h$,
    \[
    q(s \mid a, x) \, \pi(a \mid h) = \mu((x,a,s) \mid x);
    \]
    \item For every information set $h \in H$, action $a \in A(h)$, signal $s \in S$, and nodes $x,y \in h$,
    \[
    \frac{\mu((x,a,s) \mid (x,a,s), (y,a,s))}{\mu((y,a,s) \mid (x,a,s), (y,a,s))} 
    =
    \frac{q(s\mid a,x)}{q(s \mid a,y)} \frac{\mu(x \mid x,y)}{\mu(y \mid x,y)}.
    \]
\end{enumerate}
\end{Definition}

In the game we considered in the main text, when the sender is choosing an experiment (either an initial one or an additional one), the action is an experiment $\sigma \in \Sigma$, and the corresponding signal distribution is $\sigma(\cdot \mid x(\theta))$, where $x(\theta)$ is the actual state at node $x$. When the sender is choosing the additional experiments to reveal, or the receiver is choosing an action, the corresponding signal is deterministic and leads to a unique successor. 

\begin{Lemma} \label{Lem: necessary cond PEBE}
    Suppose $(\nu,\pi)$ is a PEBE and $\mu$ is the associated conditional beliefs. For any receiver information set $h$,
    \begin{enumerate}
        \item If $h \in \mathcal{H}_k$, then $\mu(\{t \geq k-1 \} \mid h) = 1$.
        \item $\mu( \{\theta \in \supp(\widehat{\pi}_h)\} \mid h )= 1$.
        \item If $n = \max \supp (p)$ and $h \in \mathcal{H}_{n+1}$, then $\mu(\theta \mid h ) = \widehat{\pi}_h(\theta)$ for every $\theta \in \Theta$.
    \end{enumerate}
\end{Lemma}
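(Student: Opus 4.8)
The plan is to derive all three claims from two structural features of a PEBE's receiver beliefs $\mu(\cdot\mid h)$: first, $\mu(\cdot\mid h)$ is supported on \emph{feasible} terminal nodes consistent with $h$ --- those in which the sender conducts no more additional experiments than their type permits (this is built into the CPS $\nu$ on the set $Z$ of feasible terminal nodes); second, rules 2 and 3 of the PEBE definition, together with the chain rule for $\nu$, force $\mu(\cdot\mid h)$ to assign zero weight to terminal nodes reached through a zero-probability signal realization and, across nodes in $h$, to respect the belief ratios dictated by the prior and by the honest signal distributions of the disclosed experiments. For Part 1, observe that every node in a receiver information set $h\in\mathcal{H}_k$ records $k$ disclosed experiment--outcome pairs, one of which is $(\sigma_0,s_0)$; the sender must therefore have conducted at least the remaining $k-1$ pairs as additional experiments, so their type is at least $k-1$. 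Since the type is fixed along every play, $Z(h)\subseteq\{z:\ t(z)\ge k-1\}$, and because $\nu(\cdot\mid Z(h))$ is a probability distribution supported on $Z(h)$ we get $\mu(\{t\ge k-1\}\mid h)=1$.

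For Part 2, fix $\theta\notin\supp(\widehat{\pi}_h)$; by the definition of the naive belief there is a disclosed pair $(\sigma_i,s_i)$ in $h$ with $\sigma_i(s_i\mid\theta)=0$. Along the path to any terminal node $z\in Z(h)$ whose state is $\theta$, the experiment $\sigma_i$ is run at some node $x$ with state $\theta$ and produces outcome $s_i$; rule 2 gives $\mu((x,\sigma_i,s_i)\mid x)=\sigma_i(s_i\mid\theta)\,\pi(\sigma_i\mid h(x))=0$, and the chain rule propagates this zero forward so that $z$ carries no weight under $\mu(\cdot\mid h)$. As $z$ was arbitrary among terminal nodes in $Z(h)$ with state outside $\supp(\widehat{\pi}_h)$, we conclude $\mu(\{\theta\in\supp(\widehat{\pi}_h)\}\mid h)=1$.

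For Part 3, apply Part 1 with $k=n+1$ to obtain $\mu(\{t\ge n\}\mid h)=1$; since $n=\max\supp(p)$ this forces $\mu(\{t=n\}\mid h)=1$. A type-$n$ sender who reaches $h\in\mathcal{H}_{n+1}$ has conducted exactly $n$ additional experiments and disclosed all of them (in addition to $\sigma_0$), so conditional on $h$ nothing is concealed. Because the sender never observes $\theta$, none of their choices --- the initial experiment, the type-conditioned sequence of additional experiments, the disclosure decision --- depends on $\theta$; the only $\theta$-dependent factors along any path to a node of $h$ are the honest signal probabilities $\sigma_i(s_i\mid\theta)$. Tracing back from $h$ toward the root and applying rule 3 at each signal realization (and the CPS axioms to aggregate over the sender's mixtures and over the order in which experiments are run) then yields, for any two states $\theta,\theta'$, the ratio $\mu(\theta\mid h)/\mu(\theta'\mid h)=\bigl[\pi_0(\theta)\prod_i\sigma_i(s_i\mid\theta)\bigr]/\bigl[\pi_0(\theta')\prod_i\sigma_i(s_i\mid\theta')\bigr]$, which is exactly $\widehat{\pi}_h(\theta)/\widehat{\pi}_h(\theta')$ (with the convention that a zero numerator, forced by Part 2, gives $\mu(\theta\mid h)=0$). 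Hence $\mu(\theta\mid h)=\widehat{\pi}_h(\theta)$ for every $\theta$.

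The main obstacle is the off-path bookkeeping in Parts 2 and 3: because $h$ need not lie on the equilibrium path, one cannot invoke Bayes' rule directly and must instead chase the zero-probability signal (Part 2) or the state-belief ratios (Part 3) through the CPS using rules 2 and 3 and the chain rule, taking care that the nodes being compared at each step lie in a common information set and that the sender's randomization over which experiments to run and in which order is aggregated correctly. Everything else is routine.
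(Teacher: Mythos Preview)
Your proposal is essentially on the same track as the paper's proof --- Parts 1 and 3 match the paper's argument almost verbatim --- but the specific sketch you give for Part~2 leans too heavily on rule~2 alone, and this is where a gap sits. From rule~2 you correctly get $\mu((x,\sigma_i,s_i)\mid x)=0$, and by the chain rule $\nu(z\mid Z(x))=0$ for any terminal node $z$ descending from $(x,\sigma_i,s_i)$. But the conditioning event you actually care about is $Z(h)$, not $Z(x)$, and in general $Z(h)\not\subseteq Z(x)$ (other nodes in the receiver's information set $h$ have different predecessors). The chain rule by itself does not let you pass from $\nu(z\mid Z(x))=0$ to $\nu(z\mid Z(h))=0$: if you try to bridge via $Z(x)\cup Z(h)$, the argument stalls precisely when $\nu(Z(h)\mid Z(x)\cup Z(h))=0$, which is exactly the off-path case you need to handle.

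What closes this gap --- and what the paper actually uses --- is rule~3, not rule~2. Because the sender never observes $\theta$, nodes in any predecessor \emph{sender} information set $h'$ that differ only in $\theta$ lie together in $h'$; rule~3 then pins down the belief ratio across such nodes, and a zero signal likelihood $\sigma_i(s_i\mid\theta)=0$ forces the $\theta$-node to receive zero relative weight. Iterating this forward along the common information sets carries the zero to $h$. You evidently know this, since your closing paragraph invokes ``rules 2 and 3 \ldots\ taking care that the nodes being compared at each step lie in a common information set,'' which is exactly the rule-3 mechanism. So the fix is simply to replace the rule-2 step in your Part~2 sketch with the rule-3 comparison across states at a predecessor sender information set, as the paper does.
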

\begin{proof}
    Suppose $h \in \mathcal{H}_k$, then for every $x \in h$, the corresponding sender type $t(x) \geq k-1$, so statement 1 holds. 

    Suppose $\theta \not\in \supp(\widehat{\pi}_h)$. For every sender information set $h'$ that is a predecessor of information set $h$, the corresponding sender posterior belief also assigns probability $0$ to $\theta$. Condition 3 in the definition of PEBE then implies that the receiver's posterior at $h$ must assign probability $0$ to $\theta$ as well. 

    If $n = \max \{ \supp ( p) \}$ and $h \in \mathcal{H}_{n+1}$, then for every $x\in h$, the corresponding sender type $t(x) = n$, so the receiver's posterior is $\mu(t=n \mid h) = 1$. This also means the sender must have revealed all the additional experiments they conducted. By condition 3 in the definition of PEBE, the receiver's posterior belief is  the receiver's naive belief $\widehat{\pi}_h$.
\end{proof}

\section{Implications of Assumption \ref{Ass1}} \label{App: Assumptions}
Let $\Pi^* \subseteq \Delta (\Theta)$ denote the set of posterior beliefs induced by $\sigma^*$. 
For any $\pi \in \Delta (\Theta)$ and $\varepsilon>0$, 
let $a^*(\pi)$ denote the sender-preferred receiver best response under belief $\pi$, and let $B(\pi,\varepsilon)$ denote the set of beliefs that has distance less than $\varepsilon$ to $\pi$. The following result shows the main implication of Assumption \ref{Ass1}: there exists a sequence of experiments close to $\sigma^*$, all of which generate strict receiver incentives almost surely.\footnote{For simplicity, in this section we still treat $\sigma^*$ as the unique optimal experiment (Assumption \ref{Ass2}). The main idea of the proof of Lemma \ref{L1} does not rely on this additional assumption.} Lemma 3 of \citet*{lipnowski2024perfect} provides a related result. 

\begin{Lemma}\label{L1}
For any $\varepsilon>0$, there exists an experiment $\sigma_{\varepsilon}^*$ such that for every $\pi \in \Pi^*$, there exists $\pi_{\varepsilon} (\pi) \in B(\pi,\varepsilon)$ that is induced by $\sigma_{\varepsilon}^*$ such that $a^*(\pi)$ is strictly optimal for the receiver when their belief about the state is $\pi_{\varepsilon}(\pi)$, and $\sigma_{\varepsilon}^*$ assigns probability more than $1-\varepsilon$ to beliefs in $\Pi_{\varepsilon}^* \equiv \{\pi_{\varepsilon}(\pi)\}_{\pi\in \Pi^*}$. 
\end{Lemma}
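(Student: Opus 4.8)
\textit{Overview.} The plan is to obtain $\sigma^*_\varepsilon$ from $\sigma^*$ by a small \emph{mean-preserving post-processing}: split each belief $\pi$ induced by $\sigma^*$ into two nearby beliefs, one carrying almost all of the original mass and lying in the interior of the region where the sender-preferred best reply $a^*(\pi)$ is strictly optimal, the other a small-weight ``correction'' belief chosen so the split is mean-preserving. Because a mean-preserving split of each induced belief automatically preserves Bayes-plausibility, the resulting distribution over beliefs again lies in $\Sigma[\pi_0]$. I would first record that $\Pi^*$ is finite: by the usual Carathéodory argument there is an optimal experiment with finite support, and Assumption~\ref{Ass2} forces it to equal $\sigma^*$; write $\Pi^* = \{\pi_1,\dots,\pi_m\}$ and $w_j \equiv \sigma^*[\pi_j] > 0$, so $\sum_j w_j\pi_j = \pi_0$.

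\textit{The perturbation.} For each $j$, I would apply Assumption~\ref{Ass1} with $a = a^*(\pi_j) \in A^*(\pi_j)$ to obtain $\rho_j \in \Delta(\Theta)$ with $\supp(\rho_j) \subseteq \supp(\pi_j)$ and $A^*(\rho_j) = \{a^*(\pi_j)\}$. Fix parameters $\delta \in (0,\varepsilon)$ and a small $\tau>0$, set $\beta \equiv \tau(1-\delta)/\delta$, and define $\pi_\varepsilon(\pi_j) \equiv (1-\tau)\pi_j + \tau\rho_j$ and $\pi_j^c \equiv (1+\beta)\pi_j - \beta\rho_j$. A direct computation gives $(1-\delta)\pi_\varepsilon(\pi_j) + \delta\pi_j^c = \pi_j$, so $\pi_j$ is split into $\pi_\varepsilon(\pi_j)$ and $\pi_j^c$ with weights $1-\delta$ and $\delta$. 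Since the receiver's expected payoff $\sum_\theta \mu(\theta)u^r(\theta,a)$ is affine in $\mu$, for each $b \neq a^*(\pi_j)$ the difference $\sum_\theta \mu(\theta)\bigl(u^r(\theta,a^*(\pi_j)) - u^r(\theta,b)\bigr)$ is $\geq 0$ at $\mu = \pi_j$ and $>0$ at $\mu = \rho_j$, hence $>0$ at $\mu = \pi_\varepsilon(\pi_j)$; as $A$ is finite, $a^*(\pi_j)$ is strictly optimal at $\pi_\varepsilon(\pi_j)$ (and $\supp(\pi_\varepsilon(\pi_j)) = \supp(\pi_j)$).

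\textit{Parameter choice and conclusion.} I would then take $\tau$ small, uniformly over the finitely many $j$ and relative to $\delta$. This makes $\|\pi_\varepsilon(\pi_j) - \pi_j\| = \tau\|\rho_j - \pi_j\| < \varepsilon$; keeps the $\pi_\varepsilon(\pi_j)$ pairwise distinct since the $\pi_j$ are; and ensures each $\pi_j^c$ is a genuine probability vector, because $\pi_j$ lies in the relative interior of the face $\{\mu : \supp(\mu) \subseteq \supp(\pi_j)\}$ (it has full support on $\supp(\pi_j)$), which also contains $\rho_j$, so a small displacement in the direction $\pi_j - \rho_j$ stays in the simplex. Letting $\sigma^*_\varepsilon$ be $\sigma^*$ with each $\pi_j$ replaced by this split, we get $\sigma^*_\varepsilon \in \Sigma[\pi_0]$; each $\pi_\varepsilon(\pi_j) \in B(\pi_j,\varepsilon)$ is induced by $\sigma^*_\varepsilon$ with $a^*(\pi_j)$ strictly optimal there; and the mass on $\Pi^*_\varepsilon = \{\pi_\varepsilon(\pi_j)\}_j$ is at least $1-\delta > 1-\varepsilon$, which is exactly the claim.

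\textit{Main obstacle.} The step that needs the most care is reconciling the two requirements on the perturbed beliefs — proximity to $\Pi^*$ together with strict receiver incentives, versus Bayes-plausibility of the whole profile — since naively moving each $\pi_j$ to $\pi_\varepsilon(\pi_j)$ shifts the mean away from $\pi_0$. The mean-preserving split, with correction belief $\pi_j^c$ and scaling $\beta = \tau(1-\delta)/\delta$, is what makes these compatible, and it works precisely because each $\pi_j$ has full support on $\supp(\pi_j)$ and hence slack to be pushed away from $\rho_j$. The only remaining bookkeeping is the order of quantifiers: $\delta$ is fixed first (controlling the mass off $\Pi^*_\varepsilon$) and $\tau$ second, small enough (depending on $\delta$) that all the ``for $\tau$ small'' conditions hold simultaneously across the finitely many $j$.
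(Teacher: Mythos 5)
Your proposal is correct and takes essentially the same route as the paper's proof: reduce to a finite-support $\sigma^*$ via Carath\'eodory, shift each induced belief a small amount toward a strict-incentive belief supplied by Assumption~\ref{Ass1}, and restore Bayes-plausibility with a small-weight mean-preserving correction belief (your $\pi_j^c=(1+\beta)\pi_j-\beta\rho_j$ is exactly the paper's $\nu^j$ up to reparametrization, and your positivity check via the relative interior of the face matches the paper's use of $\ell=\min_j\min_{\theta\in\supp(\pi_j)}\pi_j(\theta)$). No gaps.
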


\begin{proof}

\medskip
Since $\Theta$ is finite and $\Delta(\Theta)$ is a simplex in $\mathbb{R}^{|\Theta|}$, any feasible distribution over posteriors with mean $\pi_0$ can be replaced (without changing either the mean or the expected sender's payoff) by another distribution with support size at most $|\Theta|$ via Carath\'eodory’s theorem. Thus we may assume without loss of generality that the optimal experiment $\sigma^*$ has finite support $\Pi^* = \{\pi^1,\dots,\pi^m\}$ with $m\le|\Theta|$, $\sigma^*[\pi^j]>0$, and $\sum_{j=1}^m \sigma^*[\pi^j]\pi^j=\pi_0$.

Let $a_j \equiv a^*(\pi^j)$ for every $j \in \{1,\ldots,m\}$. By Assumption~\ref{Ass1}, for every $j$, there exists $\tilde{\pi}^j \in \Delta(\Theta)$ with $\supp (\tilde{\pi}^j) \subseteq \supp (\pi^j)$ such that $a_j$ is the unique best response $\tilde{\pi}^j$. Therefore, by linearity of expectation, for every $\delta > 0$ and every $j$, there exists $\pi_\varepsilon(\pi^j) \in B(\pi^j, \delta)$ such that $a_j$ is the unique best response at $\pi_\varepsilon(\pi^j)$ and $\supp (\pi_\varepsilon(\pi^j))\subseteq \supp (\pi^j)$.

Next we construct, for each $j$, a probability measure $\tau_j$ on $\Delta(\Theta)$ that (i) has mean $\pi^j$, and (ii) puts (almost) all its mass on $\pi_{\varepsilon}(\pi^j)$. Let
$
\ell \;\equiv\; \min\big\{\pi^j(\theta): j=1,\dots,m,\ \pi^j(\theta)>0\big\},
$
which is strictly positive because $\Pi^*$ is finite and each $\pi^j$ is a probability vector. Fix
$
\delta \;\le\; \frac{\varepsilon \ell}{2} < \varepsilon$
and $
\eta \;\equiv\; \varepsilon\wedge \tfrac12.
$

Define, for each $j$, a vector $\nu^j\in\mathbb{R}^{|\Theta|}$ by
$
\nu^j(\theta) \;\equiv\; 
\frac{\pi^j(\theta) - (1-\eta)\,\pi_{\varepsilon}(\pi^j)(\theta)}{\eta}
\quad\text{for every }\theta\in\Theta$. Observe that 
$$
\sum_{\theta} \nu^j(\theta)
= \frac{\sum_{\theta}\pi^j(\theta) - (1-\eta)\sum_{\theta}\pi_{\varepsilon}(\pi^j)(\theta)}{\eta}
= \frac{1-(1-\eta)}{\eta}
=1
$$ 
To see that $\nu^j$ is non-negative, note that  if  $\pi^j(\theta)=0$, then $\pi_{\varepsilon}(\pi^j)(\theta)=0$ because $\pi_{\varepsilon}(\pi^j)\in F_j$, so $\nu^j(\theta)=0$. If $\pi^j(\theta)>0$, then
$
\pi_{\varepsilon}(\pi^j)(\theta) \le \pi^j(\theta) + \delta,
$
since $\|\pi_{\varepsilon}(\pi^j)-\pi^j\|<\delta$. Hence
\begin{align*}
\pi^j(\theta) - (1-\eta)\,\pi_{\varepsilon}(\pi^j)(\theta)
&\ge \pi^j(\theta) - (1-\eta)\,(\pi^j(\theta)+\delta) \\
&= \eta\,\pi^j(\theta) - (1-\eta)\,\delta.
\end{align*}
Using $\pi^j(\theta)\ge \ell$ and $\delta\le \varepsilon \ell/2$, together with $\eta\ge \varepsilon/2$ (since $\eta=\varepsilon\wedge\frac12$), we have
$
\eta\,\pi^j(\theta) \;\ge\; \eta\,\ell \;\ge\; \frac{\varepsilon \ell}{2}
\;\ge\; \delta,
$
and thus
$
\eta\,\pi^j(\theta) - (1-\eta)\,\delta \;\ge\; \delta - (1-\eta)\,\delta \;\ge\; 0.
$
Therefore $\nu^j(\theta)\ge0$ for all $\theta$, so  $\nu^j\in\Delta(\Theta)$.

Now define $\tau_j$ to be the two-point distribution
$
\tau_j \;\equiv\; (1-\eta)\,\delta_{\pi_{\varepsilon}(\pi^j)} \;+\; \eta\,\delta_{\nu^j}.
$
By construction,
$
\mathbb{E}_{\pi\sim\tau_j}[\pi] 
= (1-\eta)\,\pi_{\varepsilon}(\pi^j) + \eta\,\nu^j
= \pi^j,
$
and
$
\tau_j[\pi_{\varepsilon}(\pi^j)] = 1-\eta \;\ge\; 1-\varepsilon.
$

\medskip

Define a new distribution 
$\sigma_{\varepsilon}^*[\cdot] \equiv \sum_{j=1}^m \sigma^*[\pi^j]\,\tau_j[\cdot]$ over posteriors by first drawing $\pi^j$ according to the original law $\sigma^*$ and then conditional on $\pi^j$, drawing the realized posterior from $\tau_j$.
Because $\mathbb{E}_{\pi\sim\tau_j}[\pi]=\pi^j$, $
\mathbb{E}_{\pi\sim\sigma_{\varepsilon}^*}[\pi]
= \sum_{j=1}^m \sigma^*[\pi^j]\,\mathbb{E}_{\pi\sim\tau_j}[\pi]
= \sum_{j=1}^m \sigma^*[\pi^j]\pi^j
= \mathbb{E}_{\pi\sim\sigma^*}[\pi]
= \pi_0.
$
Hence $\sigma_{\varepsilon}^*\in\Sigma[\pi_0]$ is Bayes-plausible.
 Let $\Pi_{\varepsilon}^* \equiv \{\pi_{\varepsilon}(\pi^j):j=1,\dots,m\}$. For each $j$,
$
\sigma_{\varepsilon}^*[\pi_{\varepsilon}(\pi^j)]
= \sigma^*[\pi^j]\cdot\tau_j[\pi_{\varepsilon}(\pi^j)]
= \sigma^*[\pi^j](1-\eta).
$
Therefore,
$
\sigma_{\varepsilon}^*[\Pi_{\varepsilon}^*]
= \sum_{j=1}^m \sigma^*[\pi^j](1-\eta)
= 1-\eta \;\ge\; 1-\varepsilon.
$
By construction, the receiver’s unique best response at $\pi_{\varepsilon}(\pi^j)$ is $a^*(\pi^j)$ by construction, and $\pi_\varepsilon(\pi^j) \in B(\pi^j,\delta) \subseteq B(\pi^j, \varepsilon)$ since $\delta \leq \frac{\varepsilon \ell}{2} < \varepsilon$. 
Thus we have constructed a Bayes-plausible, finite-support distribution $\sigma_{\varepsilon}^*$ over posteriors with the required properties.
\end{proof}

\section{Proof of Proposition \ref{Prop1}}\label{secAA}
In any equilibrium, the sender's strategy will induce a distribution over the receiver's posterior beliefs about $\theta$ with 
expected value  $\pi_0$. Hence, the sender's equilibrium payoff is no more than their payoff in an auxiliary game where they can choose any distribution over the receiver's posterior beliefs about $\theta$ subject to the constraint that the expectation equals $\pi_0$. 
 By definition, this upper bound is $\overline{V}(\pi_0)$. The sender also has the option to deviate by choosing an initial experiment that fully reveals  $\theta$ and conducting no additional experiments regardless of the realized $\theta$. Under such a deviation, for every $\theta \in \Theta$, the sender obtains payoff $u(\theta)$ when the realized state is $\theta$. 
 Hence, the sender obtains an expected payoff $\underline{V}(\pi_0)$ from this deviation, which implies that
their equilibrium payoff must be at least  $\underline{V}(\pi_0)$.

In order to show that when $p$ is degenerate the sender's payoff is $\overline{V}(\pi_0)$ in all equilibria,  let  
$\underline{u}(\pi) \equiv \min_{a \in A^*(\pi)} u^s(a)$ for every $\pi \in \Delta (\Theta)$ and let
\begin{equation*}
\overline{V}^{\sigma} \equiv \sum_{\pi \in \Delta (\Theta)} \sigma[\pi] \overline{u}(\pi) \quad \textrm{and} \quad
\underline{V}^{\sigma} \equiv \sum_{\pi \in \Delta (\Theta)} \sigma[\pi] \underline{u}(\pi),
\end{equation*}
which are the sender's expected payoffs when the receiver's information structure is $\sigma$, who breaks ties in favor of and against the sender, respectively. Since $\Theta$ and $A$ are finite, the set $\arg \max_{\sigma \in \Sigma}\overline{V}^{\sigma}$ is non-empty. 
Recall that $\sigma^*$ is the optimal experiment. 
By definition, $\overline{V}(\pi_0)=\overline{V}^{\sigma^*}$.  
By Lemma \ref{L1}, for every $\varepsilon>0$,
there exists an experiment $\sigma^{\varepsilon}$  such that $\underline{V}^{\sigma^{\varepsilon}} > \overline{V}^{\sigma^*}-\varepsilon$. 

Suppose $p$ is degenerate and assigns probability $1$ to type $t^*$. Fix any equilibrium and $\varepsilon>0$. Consider the sender's payoff under the following deviation: Conduct an initial experiment $\sigma^{\varepsilon}$ that satisfies $\underline{V}^{\sigma^{\varepsilon}} > \overline{V}^{\sigma^*}-\varepsilon=\overline{V}(\pi_0)-\varepsilon$. 
 The sender then conducts $t^*$ additional experiments, all of which are uninformative,
regardless of the realized outcome of the initial experiment. They then disclose the outcomes of all these additional experiments to the receiver. After observing $t^*$ additional outcomes, the receiver's posterior belief assigns probability $1$ to the sender disclosing all the outcomes, which implies that their posterior belief about $\theta$ must coincide with their naive belief. 
This implies that the sender's payoff is at least $\overline{V}(\pi_0)-\varepsilon$ under such a deviation. Since $\varepsilon>0$ can be arbitrary, the sender's payoff in every equilibrium is at least $\overline{V}(\pi_0)$. 

\section{Proof of Lemma \ref{lemma:non-credible}}\label{secA}
Since every degenerate belief is credible and $\sigma^*$ is assumed to be the unique optimal experiment, the hypothesis that $\sigma^*$ induces a non-credible belief implies that $\overline{V}(\pi_0)> \underline{V}(\pi_0)$. 
Let $\underline{\pi} \in \Delta (\Theta)$ denote a non-credible belief induced by $\sigma^*$. By definition, there exists $\theta \in \textrm{supp}(\underline{\pi})$ such that $u(\theta) > \overline{u}(\underline{\pi})$. Recall that $B(\pi,d)$ denotes the open ball with radius $d>0$ centered at $\pi$. For every $\sigma \in \Sigma[\pi_0]$, denote  the conditional distribution over beliefs given state $\theta$ by $\sigma^{\theta}$, where
$\sigma^\theta[B] \equiv \int_{B} \frac{\pi(\theta)}{\pi_0(\theta)} d\sigma[\pi]$ for every Borel measurable set $B \subseteq \Delta(\Theta)$.

\begin{Lemma} \label{lem: approx opt dist lower bound}
Suppose $r$ is such that $\pi(\theta) > \underline{\pi}(\theta)/2$ for every $\pi \in B(\underline{\pi},r)$. There exist $\eta, \varepsilon > 0$ such that for every $\sigma \in \Sigma[\pi_0]$, if $\mathbb{E}_{\pi\sim \sigma}\overline u(\pi) > \overline{V}(\pi_0) - \eta$, then $\sigma^\theta[B(\underline{\pi},r)] \geq \varepsilon$.
\end{Lemma}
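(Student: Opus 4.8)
The plan is to argue by contradiction combined with a compactness argument, exploiting the uniqueness of the optimal experiment (Assumption \ref{Ass2}). Suppose the conclusion fails: then for every $\eta>0$ and every $\varepsilon>0$ there is a Bayes-plausible experiment $\sigma$ with $\mathbb{E}_{\pi\sim\sigma}\overline u(\pi)>\overline V(\pi_0)-\eta$ yet $\sigma^\theta[B(\underline\pi,r)]<\varepsilon$. Taking $\eta_k\to 0$ and $\varepsilon_k\to 0$ produces a sequence $\sigma_k\in\Sigma[\pi_0]$ whose value converges to $\overline V(\pi_0)$ but for which the conditional-on-$\theta$ mass near $\underline\pi$ vanishes. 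First I would pass to a subsequence along which $\sigma_k$ converges weakly to some limit distribution $\sigma_\infty$; since $\Delta(\Delta(\Theta))$ is weak-$*$ compact (as $\Delta(\Theta)$ is compact) this is available. The barycenter map $\sigma\mapsto\mathbb{E}_{\pi\sim\sigma}\pi$ is continuous, so $\sigma_\infty\in\Sigma[\pi_0]$; and $\overline u(\cdot)$, while not continuous, is upper semicontinuous (it is a max over the finite set $A^*(\pi)$ of the continuous function $u^s$, and $A^*$ is upper hemicontinuous), so $\mathbb{E}_{\pi\sim\sigma_\infty}\overline u(\pi)\geq \limsup_k \mathbb{E}_{\pi\sim\sigma_k}\overline u(\pi)=\overline V(\pi_0)$. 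Hence $\sigma_\infty$ is an optimal experiment, and by Assumption \ref{Ass2} it must equal $\sigma^*$.

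The contradiction then comes from tracking the conditional distributions. The map $\sigma\mapsto\sigma^\theta$ is continuous in the weak-$*$ topology because $\sigma^\theta$ is obtained from $\sigma$ by reweighting with the bounded continuous density $\pi\mapsto\pi(\theta)/\pi_0(\theta)$; so $\sigma_k^\theta\to\sigma_\infty^\theta=(\sigma^*)^\theta$ weakly. Now $(\sigma^*)^\theta$ puts strictly positive mass on $B(\underline\pi,r)$: indeed $\sigma^*[\underline\pi]>0$ since $\sigma^*$ induces $\underline\pi$, and $\underline\pi(\theta)>0$ because $\theta\in\supp(\underline\pi)$, so $(\sigma^*)^\theta[\{\underline\pi\}]=\sigma^*[\underline\pi]\cdot\underline\pi(\theta)/\pi_0(\theta)>0$, and $\underline\pi\in B(\underline\pi,r)$. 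Let $c\equiv (\sigma^*)^\theta[\{\underline\pi\}]>0$. Since $B(\underline\pi,r)$ is open, the Portmanteau theorem gives $\liminf_k \sigma_k^\theta[B(\underline\pi,r)]\geq (\sigma^*)^\theta[B(\underline\pi,r)]\geq c>0$. But by construction $\sigma_k^\theta[B(\underline\pi,r)]<\varepsilon_k\to 0$, a contradiction. This yields the existence of the claimed $\eta,\varepsilon>0$: concretely, take $\varepsilon$ any positive number below $c/2$ and $\eta$ small enough that the argument above forces the value gap to be incompatible with $\sigma_k^\theta[B(\underline\pi,r)]<\varepsilon$; equivalently, argue the contrapositive directly on sequences as above.

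One subtlety I would be careful about: the experiments $\sigma$ in $\Sigma[\pi_0]$ are arbitrary distributions over beliefs, not restricted to finite support, so I should not invoke Carath\'eodory to reduce to finite support before taking limits (that reduction is fine for computing the optimal value, as in Lemma \ref{L1}, but the statement here quantifies over all $\sigma$). Working directly with weak-$*$ compactness of $\Delta(\Delta(\Theta))$ avoids this issue. A second point worth stating carefully is the upper semicontinuity of $\overline u$: since $u^r$ is defined on the finite set $\Theta\times A$, for each fixed $a$ the set $\{\pi: a\in A^*(\pi)\}$ is closed, so $\overline u(\pi)=\max\{u^s(a): a\in A,\ a\in A^*(\pi)\}$ is a max of finitely many functions each of which is $u^s(a)$ on a closed set and $-\infty$ off it, hence upper semicontinuous; this is exactly what is needed for the limit value inequality to go the right direction.

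The main obstacle is really just the bookkeeping around which topology and which semicontinuity directions make the two limits ($\overline V$ preserved in the limit, and positive mass near $\underline\pi$ preserved in the limit) both point the way we need — the value needs $\limsup$ control (from upper semicontinuity of $\overline u$) while the mass needs $\liminf$ control (from openness of the ball, i.e., Portmanteau). Once those are lined up, the uniqueness assumption does all the remaining work. I do not anticipate any genuinely hard estimate; the role of $r$ (chosen so $\pi(\theta)>\underline\pi(\theta)/2$ on $B(\underline\pi,r)$, which will be used downstream in the main proof of Lemma \ref{lemma:non-credible} to lower-bound the deviation gain) does not actually enter this particular lemma's proof beyond fixing the radius.
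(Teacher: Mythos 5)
Your proof is correct and rests on the same ingredients as the paper's --- compactness of $\Sigma[\pi_0]$, upper semicontinuity of $\sigma\mapsto\mathbb{E}_{\pi\sim\sigma}\overline u(\pi)$, uniqueness of $\sigma^*$, and the Portmanteau theorem applied to the open ball --- just packaged as a sequential contradiction rather than the paper's direct argument, which defines the closed sublevel set $\mathcal{C}=\{\sigma\in\Sigma[\pi_0]:\sigma[B(\underline\pi,r)]\le m/2\}$ and bounds its value strictly below $\overline V(\pi_0)$. The one minor divergence is that you get the conditional mass bound via weak continuity of $\sigma\mapsto\sigma^\theta$, whereas the paper first bounds $\sigma[B(\underline\pi,r)]$ and then uses the hypothesis $\pi(\theta)>\underline\pi(\theta)/2$ on the ball to pass to $\sigma^\theta$ --- which is exactly why that hypothesis appears in the lemma statement even though, as you note, your route does not need it.
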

\begin{proof}[Proof of Lemma \ref{lem: approx opt dist lower bound}]
        Let $m \equiv  \sigma^*[B(\underline{\pi},r)] > 0$ and let $\mathcal{C}:= \{\sigma \in \Sigma[\pi_0]: \sigma[B(\underline{\pi},r)] \leq m/2 \}$. Since $B(\underline{\pi},r)$ is an open set, the function that maps $\sigma$ to $\sigma[B(\underline{\pi},r)]$ is lower semi-continuous by the Portmanteau Theorem, so the lower sublevel set $\{\sigma \in \Delta(\Delta(\Theta)): \sigma[B(\underline{\pi},r)] \leq m/2\}$ is closed. Since $\Theta$ is finite, $\Sigma[\pi_0]$ is compact, and hence $\mathcal{C}$ is a closed subset of a compact set, which is also compact.

        By definition, $\sigma^* \notin \mathcal{C}$. Since $\sigma^*$ is the unique optimal experiment (Assumption \ref{Ass2}),  $\sup_{\sigma\in\mathcal{C}} \mathbb{E}_{\pi\sim\sigma}(\overline u(\pi)) < \overline{V}(\pi_0)$. Let
        $\eta \equiv \min \{ \overline{V}(\pi_0)- \sup_{\sigma\in\mathcal{C}} \mathbb{E}_{\sigma}(\overline{u}), m/2\} > 0$ and $\varepsilon \equiv \frac{\underline{\pi}(\theta)}{2\pi_0(\theta)} \eta > 0$. It follows that for any $\sigma \in \Sigma[\pi_0]$ with $\mathbb{E}_{\pi\sim\sigma}(\overline u(\pi)) > \overline{V}(\pi_0) - \eta$, $\sigma \notin \mathcal{C}$ and hence $\sigma[B(\underline{\pi},r)] > m/2 \geq \eta$. By definition, 
        \[
        \sigma^\theta[B(\underline{\pi},r)] = \int_{B(\underline{\pi},r)} \frac{\pi(\theta)}{\pi_0(\theta)} d \sigma[\pi] \geq \frac{\underline{\pi}(\theta)}{2\pi_0(\theta)} \sigma[B(\underline{\pi},r)] \geq \frac{\underline{\pi}(\theta)}{2\pi_0(\theta)} \eta = \varepsilon,
        \]
        where the first inequality follows from the assumption that $\pi(\theta) > \underline{\pi}(\theta)/2$ for every $\pi \in B(\underline{\pi},r)$. 
\end{proof}
 Let $\overline{v}$ and $\underline{v}$ denote the highest and lowest feasible payoff for the sender, respectively. Since $\overline{u}(\pi)$ is an upper semi-continuous function with respect to $\pi$, there exists $r>0$ such that for every $\pi \in B(\underline{\pi},r)$, we have $u(\theta)> \overline{u}(\pi) $ and $\pi(\theta) > \underline{\pi}(\theta)/2 \geq 0$. By Lemma \ref{lem: approx opt dist lower bound}, there exists $\eta,\varepsilon > 0$ such that for every $\sigma \in \Sigma[\pi_0]$, if $\mathbb{E}_{\pi\sim \sigma}\overline u(\pi) > \overline{V}(\pi_0) - \eta$, then $\sigma^\theta[B(\underline{\pi},r)] \geq \varepsilon$.
       Fix this selection of $r, \eta, \varepsilon$ and choose $N \in \mathbb{N}$ with
    \[
    N > \frac{3(\overline{v} - \underline{v})}{2\varepsilon \pi_0(\theta) (u(\theta) - \overline{u}(\underline{\pi}))} + \frac{2}{\varepsilon}.
    \]
    Endow the set of type distributions $\Delta(\mathbb{N})$ with the product topology. For any $p \in \Delta (\mathbb{N})$ that belongs to 
    \[
    U \equiv \Big\{p \in \Delta(\mathbb{N}): |p(t) - \tfrac{1}{N}| < \tfrac{1}{2N}\, \forall t=0,\ldots,N-1, \textrm{ and } p(\{t \geq N\}) < \tfrac{1}{2N} \Big\},
    \]
    we show that the sender's expected payoff in any equilibrium is no more than $\overline{V}(\pi_0) - \eta$.

Suppose by way of contradiction that there exists an equilibrium in which the sender's payoff is strictly greater than $\overline{V}(\pi_0) - \eta$. Let $\widetilde{\sigma}$ denote the  equilibrium distribution of the receiver's posterior beliefs. 

For any $t \in \{0,1,...,N\}$, 
pick any pure strategy that is used by type $t$ with positive probability in equilibrium and let $\mathcal{H}_t^*$ denote the set of information sets such that for every $h \in \mathcal{H}_t^*$, under that pure strategy used by type $t$,
 (i) $h$ will be reached with positive probability, and that type $t$ will conduct no additional experiments afterwards and (ii) the equilibrium probabilities with which they disclose the outcomes of the additional experiments at $h$ will induce a receiver posterior belief that belongs to $B(\underline{\pi},r)$.

For every $t \geq 0$, consider type $t+1$'s payoff when they deviate to the following strategy: Use the same strategy as type $t$ except for histories that belong to $\mathcal{H}_{t}^*$. At every $h \in \mathcal{H}_{t}^*$, conduct one additional experiment that fully reveals the state, and discloses the outcome of that experiment if and only if the state is $\theta$. Let $V_{t+1}$ denote the equilibrium payoff of type $t+1$ and $\widetilde{\sigma}_{t+1}^{\theta}[\cdot]$ denote the induced  distribution of receiver posterior beliefs given type $t+1$'s equilibrium strategy when the state is  $\theta$. Type $t+1$'s payoff from the above deviation is at least
\begin{equation*}
V_{t}+ \widetilde{\sigma}_{t}^\theta[B(\underline{\pi},r)] \pi_0(\theta) (u(\theta)-  \overline{u}(\underline \pi)),
\end{equation*}
since they will induce the same receiver-action as one of the pure strategies type $t$
uses with positive probability in equilibrium, 
except that when the state is $\theta$ and type $t$ reaches information sets that belong to $\mathcal{H}_{t}^*$, type $t+1$'s deviation will induce the receiver to take their optimal action for state $\theta$. Hence, for every $t \geq 0$, we have
\begin{equation*}
V_{t+1}-V_{t} \geq \widetilde{\sigma}_{t}^\theta[B(\underline{\pi},r)] \pi_0(\theta) \Big(u(\theta)-  \overline{u}(\underline{\pi}) \Big).
\end{equation*}
It follows that
\[
    V_{N-1} - V_0 \geq  \left( \sum_{t=0}^{N-2} \widetilde{\sigma}_t^\theta[B(\underline{\pi},r)] \right) \pi_0(\theta) \Big(u(\theta)-  \overline{u}(\underline{\pi})\Big).
\]
Since $|p(t) - \frac{1}{N}| \leq \frac{1}{2N}$ for every $t = 0,\ldots, N-1$, we have  
\[
(\tfrac{1}{N} + \tfrac{1}{2N}) \sum_{t=0}^{N-2}\tilde{\sigma}_t^{\theta}[B(\underline{\pi},r)] \geq \sum_{t=0}^{N-2} p(t)\widetilde{\sigma}_t^{\theta}[B(\underline{\pi},r)] = \widetilde{\sigma}^{\theta}[B(\underline{\pi},r)] - \sum_{t=N-1}^{\infty} p(t)\widetilde{\sigma}_t^{\theta}[B(\underline{\pi},r)]
\]
By Lemma \ref{lem: approx opt dist lower bound}, we have 
\[
\widetilde{\sigma}^{\theta}[B(\underline{\pi},r)] - \sum_{t=N-1}^{\infty} p(t)\widetilde{\sigma}_t^{\theta}[B(\underline{\pi},r)] \geq \varepsilon - p(N-1) - p(\{t\geq N\}) \geq \varepsilon - \tfrac{2}{N}, 
\]
where the last inequality follows from $|p(N-1) - \frac{1}{N}| \leq \frac{1}{2N}$ and $p(\{t\geq N\}) \leq \tfrac{1}{2N}$.  Therefore,
\[
    V_{N-1} - V_0  \geq (\tfrac{2}{3}N\varepsilon - \tfrac{4}{3}) \pi_0(\theta) (u(\theta)-  \overline{u}(\underline \pi)).
\]
This together with the construction of $N$ leads to a contradiction.

\section{Proof of Theorem \ref{Theorem3}}\label{secB}
Since $u(\theta) > \max_{\pi \in \Pi_{\theta}^*} \overline{u}(\pi)$ and the best reply correspondence is upper-hemi-continuous, there exists $\varepsilon>0$ such that for every $\pi'$ such that  the Hausdorff distance between $\pi'$ and $\Pi_{\theta}^*$ is less than $\varepsilon$, we have $\overline{u}(\pi') \leq \max_{\pi \in \Pi_{\theta}^*} \overline{u}(\pi)$. 
Since the optimal experiment is unique and the sender's payoff is bounded, we know that there exists $\lambda_0 >0$ such that for every $\eta>0$ and every equilibrium where the sender's payoff is more than $\overline{V}(\pi_0)-\eta$, we have that conditional on every $\theta \in \Theta$, 
 with probability at least $1-\lambda_0 \eta$,
the receiver will reach information sets where their posterior beliefs 
will have Hausdorff distance less than $\eta$ to $\Pi_{\theta}^*$. Let $\overline{v}$ and $\underline{v}$ denote the sender's highest and lowest feasible payoff, respectively.

Let $\lambda \equiv 2 \lambda_0$ and
\begin{equation*}
n \equiv \Big\lceil \frac{2(\overline{v}-\underline{v})}{\displaystyle \pi_0(\theta) \Big\{ u(\theta) - \max_{\pi \in \Pi_{\theta}^*} \overline{u}(\pi) \Big\} } \Big\rceil+1,
\end{equation*}
and pick any $\eta$ such that 
\begin{equation*}
\eta < \min \left\{\varepsilon, \frac{1}{\lambda n}\right\}.
\end{equation*}
Suppose by way of contradiction that $p$ assigns probability more than $\lambda \cdot \eta$ to at least $n$ types yet there exists an equilibrium where the sender's payoff is more than $\overline{V}(\pi_0)-\eta$. By construction, the receiver's posterior belief will have Hausdorff distance less than $\eta$ to set $\Pi_{\theta}^*$ with probability more than $1-\lambda \eta/2$ conditional on $\theta$. Let $t_1<...<t_n$ denote $n$ types that occur with probability at least $\lambda \eta$. 
Let $\Pr(\cdot \mid \theta,t)$ denote the induced probability measure given the equilibrium strategy of type $t$ when the state is $\theta$. We use $D$ to denote the Hausdorff distance between the induced receiver posterior and $\Pi^*_\theta$, which is a random variable. Taking expectation across types, we have $\mathbb{E}_{t\sim p}\left[\Pr(D \ge \eta\mid\theta,t)\right]<\lambda\eta/2$. Markov's inequality then implies that $p( \left\{t \in \Delta(\mathbb{N}):\Pr(D \ge \eta\mid\theta,t)\ge 1/2\right\})<\lambda\eta$. 
Because each selected type $t_i$ has probability at least $\lambda\eta$, none of them can be in that set, so for every $t_i$ we have $\Pr(D<\eta\mid\theta,t_i)>1/2$. For every $i \in \{1,2,...,n\}$, 
pick any pure strategy that is used with positive probability in equilibrium by type $t_i$ and let $\mathcal{H}_i^*$ denote the set of sender histories such that for every $h \in \mathcal{H}_i^*$, under that pure strategy used by type $t_i$,
 (i) they will reach $h$ with positive probability after which they will conduct no additional experiments and (ii) the equilibrium probabilities with which they disclose the outcomes of the additional experiments
 at $h$ will lead to a receiver posterior belief that has Hausdorff distance less than $\eta$ to $\Pi_{\theta}^*$.

For every $i \geq 2$, consider type $t_i$'s payoff when they deviate to the following strategy: Use the same strategy as type $t_{i-1}$ except for histories that belong to $\mathcal{H}_{i-1}^*$. At every $h \in \mathcal{H}_{i-1}^*$, conduct one additional experiment that fully reveals the state, and discloses the outcome of that experiment if and only if the state is $\theta$. Let $V_i$ denote the equilibrium payoff of type $t_i$. Type $t_i$'s payoff from the above deviation is at least
\begin{equation*}
V_{i-1}+ \frac{1}{2} \pi_0(\theta) \Big\{
u(\theta)- \max_{\pi \in \Pi_{\theta}^*} \overline{u}(\pi)
\Big\},
\end{equation*}
since they will induce the same receiver action as one of the pure strategies type $t_{i-1}$
uses with positive probability in equilibrium, 
except that when the state is $\theta$ and type $t_{i-1}$ reaches history $\mathcal{H}_{i-1}^*$, they will induce receiver action $a^*(\theta)$. Hence, for every $i \geq 2$, we have
\begin{equation*}
V_i-V_{i-1} \geq \frac{1}{2} \pi_0(\theta) \Big\{
u(\theta)- \max_{\pi \in \Pi_{\theta}^*} \overline{u}(\pi)
\Big\}.
\end{equation*}
This together with the construction of $n$ leads to a contradiction.

\section{The Requirement in Theorem \ref{Theorem3} is not Redundant}\label{secC}
We use an example to show that the requirement in Theorem \ref{Theorem3} cannot be 
weakened to that in Lemma \ref{lemma:non-credible}. 
Suppose $A \equiv \{0,2,3\}$, $\Theta \equiv \{\theta_0,\theta_1\}$, and $t \in \{0,1\}$. 
The receiver's prior belief assigns probability $\pi_0$ to state $\theta_1$.
The sender's payoff equals $a$. 
The receiver's optimal action is $0$ if $\pi \in [0,1/3]$, is $2$ if $\pi \in [1/3,2/3]$, and is $3$ if $\pi \in [2/3,1]$, and the prior state distribution  is $\pi_0 = 1/2$, so the optimal experiment induces beliefs $1/3$ and $2/3$. Belief $1/3$ is not credible,  but the sender's payoff under state $\theta_1$ is the same as their payoff when they induce belief $2/3$.  
Suppose the prior type distribution $p \in \Delta(\mathbb{N})$ is given by  $p(0)=1/3$ and $p(t) = \frac{1}{3}\frac{1}{2^{t-1}}$ for all $t \in \mathbb{N} \setminus \{0\}$. 

We will show that this example has an equilibrium where (i) the receiver's posterior belief is $1/3$ at every information set where no additional outcome is disclosed and is $2/3$ at every on-path information set where one additional outcome is disclosed, (ii) the sender conducts an uninformative initial experiment, (iii) if the sender's realized type $t$ is at least $1$, the sender will conduct $t$ additional experiments, all of which have two signal realizations $\{\underline{s},\overline{s}\}$ with $\sigma(\underline{s}|\theta_0)=2/3$ and $\sigma(\overline{s}|\theta_1)=1$, and will only disclose the first realized outcome that is $\overline{s}$, and (iv) the receiver's posterior belief is $0$ at every off-path information set with a non-degenerate naive belief. 

To see that such an equilibrium exists,  we first verify that the receiver's posterior beliefs on the equilibrium path are consistent with Bayes rule. After observing an additional outcome $(\sigma, \overline{s})$, the receiver's posterior belief is 
\[
    \frac{\pi_0(1-p(0))}{\sum_{t=1}^{\infty}p(t) (1-(1-\pi_0)(\frac{2}{3})^t)} 
    = \frac{\frac{1}{3}}{\frac{1}{3}\sum_{t=1}^\infty\frac{1}{2^{t-1}}(1-\frac{1}{2}(\frac{2}{3})^t)} 
    = \frac{2}{3}.
\]
After observing no additional outcome, the receiver's posterior belief is 
\[
    \frac{\pi_0 p(0)}{p(0)+\sum_{t=1}^\infty p(t)(1-\pi_0)(\frac{2}{3})^t} = \frac{\frac{1}{6}}{\frac{1}{3} + \frac{1}{3}\sum_{t=1}^\infty \frac{1}{2^t}(\frac{2}{3})^t} = \frac{1}{3}.
\]
Next, we check that no sender type has an incentive to deviate. For type $t$, the on-path payoff can be calculated recursively: first conduct the experiment $\sigma$, reveal the outcome $(\sigma, \overline{s})$ if $\overline{s}$ realizes; otherwise, continue experimenting until a signal $\overline{s}$ realizes in the remaining $t-1$ experiments. $\overline{s}$ realizes with probability $2/3$, leading to payoff $\overline{u}(2/3)$, and otherwise the sender gets a continuation payoff $V_{t-1} = (1-(2/3)^{t-1})\overline{u}(2/3) + (2/3)^{t-1} \overline{u}(1/3)$, which is the expected payoff from repeating $\sigma$ for $t-1$ times given $\theta = \theta_0$. Hence, type $t$'s on-path payoff is $\frac{2}{3} \overline{u}(2/3) + \frac{1}{3} V_{t-1}$. The most profitable deviation for type $t$ is to first conduct a fully informative additional experiment, reveal the outcome if $\theta = \theta_1$, and otherwise repeat the experiment $\sigma$ for $t-1$ times and only disclose the first realized outcome that is $\overline s$. We may also calculate the payoff recursively, which is $\frac{1}{2}\overline{u}(1) + \frac{1}{2}V_{t-1}$. The on-path payoff dominates the best deviation as $V_{t-1} < 3 = \overline{u}(2/3)$.

\section{Proof of Theorem \ref{Theorem2}}\label{secD}
When the optimal experiment $\sigma^*$, which is unique by Assumption \ref{Ass2}, fully reveals $\theta$, we have $\overline{V}(\pi_0)=\underline{V}(\pi_0)$, in which case 
Proposition \ref{Prop1} will imply that there exists an equilibrium that attains payoff $\overline{V}(\pi_0)$. In what follows, we focus on the case where $\sigma^*$ does not fully reveal $\theta$. 
By Lemma \ref{L1},  for every $\delta>0$ there exists $\varepsilon>0$ and an experiment $\sigma^*_\varepsilon$ such that for every $\pi \in \Pi^*$, there exists $\pi_{\varepsilon} (\pi) \in B(\pi,\varepsilon)$ that is induced by $\sigma_{\varepsilon}^*$ such that $a^*(\pi)$ is strictly optimal for the receiver when their belief about the state is $\pi_{\varepsilon}(\pi)$, and $\sigma_{\varepsilon}^*$ assigns probability more than $1-\varepsilon$ to beliefs in $\Pi_{\varepsilon}^* = \{\pi_{\varepsilon}(\pi)\}_{\pi\in \Pi^*}$. 
The sender's expected payoff from committing to any such   $\sigma_{\varepsilon}^*$ is thus at least $\overline{V}(\pi_0)-\delta$.

Fix a type distribution $p\in \Delta(\mathbb{N})$ that assigns probability $p(n) > 1-\varepsilon$ to some $n \in \mathbb{N}$. We show that there exists a PEBE where the sender's payoff is more than $\overline{V}(\pi_0) - \delta$. If $\supp (p)$ is bounded above, let $N \equiv \max \supp (p)$ denote the highest possible sender type. We focus on the case where $n < N$ or $\supp (p)$ is not bounded above (there's no such $N$), because if $n = N$, using a similar argument as Proposition \ref{Prop1}, the type $n$ sender can secure a payoff that is close to their commitment payoff. It follows that the sender's ex ante payoff is also close to the commitment payoff since $p(n) \approx 1$.

If $n < N$ or $\supp (p)$ is not bounded above, we construct an equilibrium as follows:

\begin{itemize}
    \item The sender conducts an initial experiment $\sigma_{\varepsilon}^*$. The receiver's naive belief after observing the outcome of this initial experiment is called their \textit{interim belief}. 
    \item Following any credible interim belief in $\Pi^*_\varepsilon$, all types of the sender except the highest type $N$ (if it exists) conduct no additional experiments in equilibrium.
    \item Following any non-credible interim belief in $\Pi^*_\varepsilon$, types $1$ to $n-1$ conduct a fully informative additional experiment, and reveal some of the states while concealing others. Type $n$ conducts $n$ uninformative experiments and discloses all outcomes. They may also imitate lower types, depending on the relative probabilities of type 0 and types $1$ to $n-1$; we discuss two cases separately in subsection \ref{subsec: noncred interim}. Types strictly greater than $n$ except $N$ conduct $n$ uninformative experiments, fully disclose all outcomes, and then conduct another fully informative experiment, concealing unfavorable state reports. 
    \item Interim beliefs not in $\Pi^*_\varepsilon$ occur with probability less than $\varepsilon$ under $\sigma^*_\varepsilon$ and hence have a negligible impact on the sender's ex ante expected payoff.
    \item The receiver's on-path posteriors are consistent with Bayes rule. At any off-path information set $h \not\in \mathcal{H}_{N+1}$, if the sender reveals any additional signal that is informative and induces a non-degenerate naive belief $\pi$, then the receiver's posterior belief assigns probability $1$ to some $\theta \in \supp (\pi)$ that satisfies
    \begin{equation*}
    u(\theta) \leq u(\theta') \textrm{ for every } \theta' \in \supp(\pi).
    \end{equation*}
\end{itemize}

In this equilibrium, following any interim belief $\pi \in \Pi^*_\varepsilon$, type $n$'s on-path payoff is close to $\overline{u}(\pi)$ and hence the sender's ex ante expected payoff is close to $\overline{V}(\pi_0)$ as $p(n) \approx 1$. Given the receiver's off-path posterior, it is without loss of generality to focus on subgames where the sender does not deviate to other initial experiments. In the paragraph below, we analyze subgames following a credible interim belief in $\Pi^*_\varepsilon$. In subsection \ref{subsec: noncred interim}, we consider subgames following non-credible interim beliefs in $\Pi^*_\varepsilon$ and describe the details of the strategy profiles.

Following any credible interim belief $\pi \in \Pi_{\varepsilon}^*$, all types of the sender except the highest type $N$ (if it exists) conduct no additional experiments in equilibrium. Doing so is optimal for all types of the sender given the receiver's off-path belief and the facts that (i) $\pi$ is credible and (ii) type $N$ has negligible probability, so the receiver's posterior when observing no additional experiments is close to $\pi$ and also credible.

\subsection{Non-credible Interim Beliefs} \label{subsec: noncred interim}

Consider the subgame following 
a non-credible interim belief 
 $\pi \in \Pi_{\varepsilon}^*$.
Let $\supp(\pi) \equiv \{\theta_1,...,\theta_k\}$,   
$\pi_j \equiv \pi(\theta_j)$, and
$u_j \equiv u(\theta_j)$ for $j\in \{1,\ldots,k\}$. 
Without loss of generality, we assume that $u_1 \geq u_2 \geq ... \geq u_k$. Let $u^* \equiv u^s(a^*(\pi))$. 
Since $\pi \in \Pi_{\varepsilon}^*$ is non-credible and full disclosure is suboptimal at $\pi$, we know that 
$u_1 > u^* > u_k$. 

Let $u_i^*$ be defined via the following equation:
\begin{equation}\label{indifferencetypen}
u^*= \sum_{j=1}^i \pi_j u_j + \Big(1-\sum_{j=1}^i \pi_j \Big) u_i^*.
\end{equation}
Intuitively, $u_i^*$ is the sender's payoff from no disclosure so that he is indifferent between receiving $u^*$ and disclosing the state if and only if it belongs to $\{\theta_1,...,\theta_i\}$. By definition, $u_k^*=+\infty$. 
The sender's incentive to reveal states with index below $i$ and to conceal states with index above $i$ requires that $u_i \geq u_i^* \geq u_{i+1}$. Lemma \ref{L2} shows that this incentive constraint is equivalent to $u_i^*$ reaching a local minimum at $i$. 
\begin{Lemma}\label{L2}
$u_i \geq u_i^* \geq u_{i+1}$ if and only if $\min\{u_{i+1}^*,u_{i-1}^*\} \geq u_i^*$. 
\end{Lemma}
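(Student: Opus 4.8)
The plan is to reduce both the two-sided inequality $u_i \ge u_i^* \ge u_{i+1}$ and the ``local minimum'' condition $\min\{u_{i-1}^*,u_{i+1}^*\}\ge u_i^*$ to the same pair of scalar comparisons, by exploiting a ``mediant'' structure hidden in \eqref{indifferencetypen}. First I would solve \eqref{indifferencetypen} for $u_i^*$, writing $u_i^* = N_i/Q_i$ where $Q_i \equiv 1-\sum_{j\le i}\pi_j=\sum_{j>i}\pi_j$ and $N_i \equiv u^*-\sum_{j\le i}\pi_j u_j$ for $i\in\{0,1,\dots,k-1\}$ (so $u_0^*=u^*$), and recall the convention $u_k^*=+\infty$. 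Since $N_{i-1}=N_i+\pi_i u_i$ and $Q_{i-1}=Q_i+\pi_i$, for every $i\in\{1,\dots,k-1\}$ this yields the identity $u_{i-1}^* = \frac{Q_i}{Q_i+\pi_i}u_i^* + \frac{\pi_i}{Q_i+\pi_i}u_i$, a convex combination of $u_i^*$ and $u_i$ with strictly positive weights. (At the outset I would note that $k\ge 2$, since a non-credible belief cannot be degenerate, so the index range is nonempty and all $Q_i$ with $i\le k-1$ are strictly positive.) Because $u_{i-1}^*$ lies strictly between $u_i^*$ and $u_i$, the quantities $u_{i-1}^*-u_i^*$ and $u_i-u_i^*$ have the same sign, so $u_{i-1}^*\ge u_i^* \iff u_i\ge u_i^*$.

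Next I would run the same computation one step forward. For $i\le k-2$ we have $Q_{i+1}>0$ and $u_i^* = \frac{Q_{i+1}}{Q_{i+1}+\pi_{i+1}}u_{i+1}^* + \frac{\pi_{i+1}}{Q_{i+1}+\pi_{i+1}}u_{i+1}$, so $u_i^*$ lies strictly between $u_{i+1}^*$ and $u_{i+1}$; chasing signs gives $u_i^*\ge u_{i+1} \iff u_{i+1}^*\ge u_{i+1} \iff u_{i+1}^*\ge u_i^*$. Combining this with the previous equivalence, $\min\{u_{i-1}^*,u_{i+1}^*\}\ge u_i^*$ holds iff both $u_i\ge u_i^*$ and $u_i^*\ge u_{i+1}$ hold, which is exactly the claimed incentive condition.

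The only subtle point — and the main (though minor) obstacle — is the boundary index $i=k-1$, where $u_{i+1}^*=u_k^*=+\infty$ is merely a convention and the forward convex-combination identity degenerates. There $u_{i+1}^*\ge u_i^*$ is automatic, so I must separately check that $u_i^*\ge u_{i+1}$, i.e.\ $u_{k-1}^*\ge u_k$, is also automatic. This holds because $u_{k-1}^* = N_{k-1}/\pi_k = u_k + N_k/\pi_k$ with $N_k = u^*-\sum_{j=1}^k\pi_j u_j\ge 0$, since full disclosure is suboptimal at $\pi$. (The definitions show this hypothesis is genuinely needed: if $N_k<0$, then $u_{k-1}^*<u_k\le u_{k-1}$, so the local-minimum condition at $k-1$ could hold while the incentive condition fails.) With both $u_i^*\ge u_{i+1}$ and $u_{i+1}^*\ge u_i^*$ unconditionally true at $i=k-1$, the forward equivalence holds trivially there as well, and the argument is complete.
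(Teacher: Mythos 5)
Your proof is correct and algebraically equivalent to the paper's: the mediant identity $u_{i-1}^* = \tfrac{Q_i}{Q_i+\pi_i}u_i^* + \tfrac{\pi_i}{Q_i+\pi_i}u_i$ is a repackaging of the paper's identities (\ref{inequalityA})--(\ref{inequalityB}), and both arguments reduce the local-minimum condition to the same two sign comparisons (backward: $u_{i-1}^*\ge u_i^* \iff u_i\ge u_i^*$; forward: $u_{i+1}^*\ge u_i^* \iff u_i^*\ge u_{i+1}$). The one place you go beyond the paper is the boundary index $i=k-1$, where $u_k^*=+\infty$ makes the forward identity degenerate (the paper's display there reads $0\cdot(-\infty)$), and your separate verification that $u_{k-1}^*=u_k+N_k/\pi_k\ge u_k$, using that full disclosure is suboptimal at $\pi$, is a genuine and welcome patch of that loose end.
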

\begin{proof}
Applying equation (\ref{indifferencetypen}) to $i$ and $i+1$, we have:
\begin{equation}\label{iandi+1}
\sum_{j=1}^i \pi_j u_j + \Big(1-\sum_{j=1}^i \pi_j \Big) u_i^* = \sum_{j=1}^{i+1} \pi_j u_j + \Big(1-\sum_{j=1}^{i+1} \pi_j \Big) u_{i+1}^*.
\end{equation}
Equation (\ref{iandi+1}) is equivalent to
\begin{equation*}
 \Big(1-\sum_{j=1}^{i+1} \pi_j + \pi_{i+1}\Big) u_i^*=
\pi_{i+1} u_{i+1} + \Big(1-\sum_{j=1}^{i+1} \pi_j \Big) u_{i+1}^*,
\end{equation*}
which can be rewritten as
\begin{equation}\label{inequalityA}
\Big(1-\sum_{j=1}^{i+1} \pi_j \Big) (u_{i}^*-u_{i+1}^*) = \pi_{i+1} (u_{i+1}-u_i^*).
\end{equation}
This implies that $u_{i+1}^* \geq u_i^*$ if and only if $u_i^* \geq u_{i+1}$. 
Substituting $\sum_{j=1}^{i+1}\pi_j=\sum_{j=1}^{i}\pi_j+\pi_{i+1}$ and regrouping terms, we obtain the following from
(\ref{iandi+1}):
\begin{equation*}
\Big(1-\sum_{j=1}^{i} \pi_j\Big) u_i^*=
\pi_{i+1} u_{i+1} + \Big(1-\sum_{j=1}^{i} \pi_j -\pi_{i+1}\Big) u_{i+1}^*,
\end{equation*}
from which we obtain that
\begin{equation}\label{inequalityB}
\Big(1-\sum_{j=1}^{i} \pi_j \Big) (u_i^*-u_{i+1}^*) = \pi_{i+1} (u_{i+1}-u_{i+1}^*).
\end{equation}
This implies that $u_{i+1} \geq u_{i+1}^*$ if and only if $u_i^* \geq u_{i+1}^*$. An analogous argument shows $u_{i-1}^{*}\geq u_{i}^{*}\;\Longleftrightarrow\;u_{i}\ge u_{i}^{*}.$ 
\end{proof}
\begin{Lemma}\label{L3}
There exists $i \in \{1,2,...,k\}$ such that $u_i \geq u_i^* \geq u_{i+1}$. 
\end{Lemma}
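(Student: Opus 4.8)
The plan is to produce the index $i$ by a minimization argument on the auxiliary sequence $(u_i^*)$, and then invoke Lemma \ref{L2} to turn ``local minimum of $(u_i^*)$'' into the desired chain $u_i \ge u_i^* \ge u_{i+1}$. First I would record the two facts the argument rests on: (i) $\supp(\pi)$ has at least two elements, since $\pi$ is non-credible while every degenerate belief is credible, so $1-\pi_1>0$; and (ii) $u_1 > u^* > u_k$ by hypothesis. I would also extend equation (\ref{indifferencetypen}) to $i=0$, where it reads $u^* = u_0^*$, adopting the convention $u_0^*\equiv u^*$, and keep the convention $u_k^*=+\infty$. Observe that $u_0^*,u_1^*,\dots,u_{k-1}^*$ are all finite because each has denominator $1-\sum_{j\le i}\pi_j\ge \pi_k>0$.

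The heart of the proof is to show that the minimum of the finite numbers $u_0^*,\dots,u_{k-1}^*$ is \emph{not} attained at $0$. Equation (\ref{inequalityA}) was derived purely from applying (\ref{indifferencetypen}) at two consecutive indices, so it is valid at $i=0$ under our convention; specializing it there gives
\[
(1-\pi_1)\,(u_0^*-u_1^*) \;=\; \pi_1\,(u_1-u_0^*) \;=\; \pi_1\,(u_1-u^*) \;>\;0,
\]
and since $1-\pi_1>0$ this forces $u_1^* < u_0^*$. Hence any minimizer $i^*$ of $\{u_0^*,\dots,u_{k-1}^*\}$ lies in $\{1,\dots,k-1\}$.

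To finish, note that by minimality $u_{i^*}^*\le u_{i^*-1}^*$, and also $u_{i^*}^*\le u_{i^*+1}^*$: this is again minimality when $i^*<k-1$, and when $i^*=k-1$ it holds trivially because $u_k^*=+\infty$. Therefore $\min\{u_{i^*+1}^*,u_{i^*-1}^*\}\ge u_{i^*}^*$, and Lemma \ref{L2} yields $u_{i^*}\ge u_{i^*}^*\ge u_{i^*+1}$, which is the claim with $i=i^*$.

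The one delicate point is the treatment of the endpoints. On the left, one must check that the extended identity (\ref{inequalityA}) really holds at $i=0$ (a one-line computation from $u_0^*=\pi_1 u_1 + (1-\pi_1)u_1^*$), and one must notice that ruling out $i^*=0$ is precisely where the hypothesis $u_1>u^*$ is used — without it the minimizer could sit at the left endpoint, where Lemma \ref{L2} (which references $u_{i-1}^*$) does not apply. The right endpoint is harmless: the convention $u_k^*=+\infty$ makes $k-1$ automatically a local minimum of the extended sequence whenever it is the minimizer.
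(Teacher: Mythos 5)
Your proof is correct and takes essentially the same route as the paper's: both arguments locate a local minimum of the sequence $(u_i^*)$ (using $u_k^*=+\infty$ at the right end) and convert it into the desired chain via Lemma~\ref{L2}, with the non-credibility inequality $u_1>u^*$ doing exactly the work of excluding the left boundary. The only difference is packaging --- by adopting the convention $u_0^*\equiv u^*$ and verifying that the identities behind Lemma~\ref{L2} extend to $i=0$, you fold the paper's separate second case (where $u_2^*>u_1^*$ and $i=1$ is checked directly by contradiction) into a single minimization argument.
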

\begin{proof}
Consider two cases. First, if $u_1^* \geq u_2^*$, then the fact that $u_k^*=+\infty$ implies that there exists $i$ such that  $\min\{u_{i+1}^*,u_{i-1}^*\} \geq u_i^*$. Lemma \ref{L2} implies that $u_i \geq u_i^* \geq u_{i+1}$ for such $i$. Second, if $u_2^* > u_1^*$, then inequality (\ref{inequalityA}) implies that $u_1^* \geq u_2$. To show that $i=1$ satisfies our requirement, we only need to show that $u_1 \geq u_1^*$. Suppose by way of contradiction that $u_1< u_1^*$, then given that $u_1\pi_1+ (1-\pi_1) u_1^* = u^*$, we have $u_1 < u^* < u_1^*$. This contradicts an implication that $\pi$ is non-credible which is $u_1 > u^*$. 
\end{proof}
Define strategy $s^i$ to be the sender pure strategy (in the subgame following interim belief $\pi$) where the sender conducts only one fully informative additional experiment and discloses the resulting outcome if and only if $\theta \in \{\theta_1,...,\theta_i\}$. 
Recall that type $n$ occurs with probability close to $1$. Define strategy $s^{nd}$ to be a sender pure strategy where
the sender conducts $n$ uninformative experiments and discloses all outcomes. 
By definition, strategy $s^i$ is only feasible for types of at least $1$ and strategy $s^{nd}$ is feasible only for types of at least $n$. 
Let $\mathbb{G} \subseteq \Delta (\Theta) \times \mathbb{R}$ be such that $(\pi,u) \in \mathbb{G}$ if and only if there exists $\alpha \in \Delta (A)$ such that $\alpha$ is the receiver's best reply at posterior belief $\pi$ and $\mathbb{E}_{a\sim \alpha}[u^s(a)]=u$. By definition, $\mathbb{G}$ is closed and connected.

Fix any $i$ that satisfies $u_i \geq u_i^* \geq u_{i+1}$, which exists by Lemma \ref{L3}. Let $\pi_*^i$ denote the receiver's posterior belief 
that assigns probability $\frac{\pi_j}{\sum_{\tau=i+1}^k \pi_{\tau}}$ to state $\theta_j$ for $j \in \{i+1,...,k\}$ and assigns zero probability to other states. 
When $\varepsilon$ is sufficiently small, the fact that $\pi \in \Pi_{\varepsilon}^*$
implies that $\pi$ is close to an interim belief $\pi'$ under the sender's optimal experiment $\sigma^*$ and that $a^*(\pi)=a^*(\pi')$.
Because $\sigma^*$ is  the optimal experiment, we know that  
for every $a \in \arg\max_{a' \in A} u^r(\pi_*^i,a')$, 
\begin{equation}\label{IC-satisfied}
u^* > \sum_{j=1}^i \pi_j u_j + \Big(1-\sum_{j=1}^i \pi_j \Big) u^s(a). 
\end{equation}
At interim belief $\pi$,
let $\pi^i$ denote the receiver's posterior belief about $\theta$ at a history that is induced by type $0$ for every $\theta \in \{\theta_1,...,\theta_k\}$, by types $1$ to $n-1$ if and only if $\theta \in \{\theta_{i+1},...,\theta_k\}$, and is never induced by types $n$ and above. Formally, let $p(t)$ denote the probability that the receiver's prior belief assigns to type $t$. Belief $\pi^i$ is then defined as:
\begin{equation*}
\pi^i(\theta_j) \equiv \frac{p(0) \pi_j}{p(0) + \sum_{t=1}^{n-1} p(t) \cdot  \sum_{s=i+1}^k \pi_s} \textrm{ for every } j \leq i
\end{equation*}
and
\begin{equation*}
\pi^i(\theta_j) \equiv \frac{\sum_{t=0}^{n-1} p(t) \pi_j}{p(0) + \sum_{t=1}^{n-1} p(t) \cdot  \sum_{s=i+1}^k \pi_s } \textrm{ for every } j \geq i+1.
\end{equation*}
We consider two cases, depending on the receiver's best reply at $\pi^i$.
\paragraph{Case A:} Suppose there exists $a \in \arg\max_{a' \in A} u^r(\pi^i,a')$ such that 
\begin{equation}\label{IC-violated}
u^* \leq \sum_{j=1}^i \pi_j u_j + \Big(1-\sum_{j=1}^i \pi_j \Big) u^s(a).
\end{equation}
Intuitively, this is the case where the probability of type $0$ is high relative to the probability of types $1$ to $n-1$. 
The definition of $u_i^*$ in (\ref{indifferencetypen}) implies that $u^s(a) \geq  u_i^*$. 

In equilibrium following interim belief $\pi$, sender  types
strictly greater than $n$ (except the highest type $N$ if it exists) conduct $n$ uninformative experiments, fully disclose all outcomes, and then conduct another fully informative experiment and disclose its outcome if and only if the realized state $\theta=\theta_j$ satisfies $u_j>u^*$.  Sender type $N$ may use a different strategy, but it is different from $s^i$ or $s^{nd}$. Sender types $1$ to $n-1$ use strategy $s^i$. 

Sender type $n$ mixes between strategy $s^{nd}$ and strategy $s^i$. If type $n$ uses strategy $s^{nd}$ for sure, the receiver's posterior when observing no additional experiment is $\pi^i$. If type $n$ uses strategy $s^i$ for sure, as it occurs with probability close to $1$, the receiver's posterior $\widetilde{\pi}^i_*$ when observing no additional experiment is close to $\pi^i_*$, so there exists $a \in \arg\max_{a'\in A} u^r(\widetilde{\pi}^i_*, a')$ satisfying \eqref{IC-satisfied}. Since graph $\mathbb{G}$ is closed and connected, type $n$ can mix between strategies $s^i$ and $s^{nd}$ such that
under the receiver's posterior belief when no additional outcome is disclosed, they have a (potentially mixed) best reply $\alpha \in \Delta (A)$ such that $\mathbb{E}_{a\sim \alpha}[u^s(a)]=u_i^*$, which makes type $n$ indifferent between $s^i$ and $s^{nd}$. 

\paragraph{Case B:} Suppose
next that
$u^* > \sum_{j=1}^i \pi_j u_j + \Big(1-\sum_{j=1}^i \pi_j \Big) u^s(a)$
for every  $a \in \arg\max_{a' \in A} u^r(\pi^i,a')$.
Intuitively, this is the case where the probability of type $0$ is low relative to the probability of types $1$ to $n-1$. By definition, $u^s(a)<u_i^*$. 
Since the construction of $i$ requires that $u_i^* \in [u_{i+1},u_i]$, 
we know that $u^s(a) < u_i$. 

On the equilibrium path following interim belief $\pi$, sender types strictly greater than $n$  (except the highest type $N$ if it exists)  will conduct $n$ uninformative experiments, fully disclose all outcomes, and then conduct another fully informative experiment and disclose its outcome if and only if the realized state is $\theta=\theta_j$ that satisfies $u_j>u^*$. Sender type $N$ may use a different strategy, but it is different from $s^j$ or $s^{nd}$. Sender types $1$ to $n-1$ mix between strategy $s^j$ and strategy $s^{j+1}$ for some $j \geq i$. Sender type $n$ uses strategy $s^{nd}$ for sure.

We show that there exist $j \geq i$ as well as a mixing probability for types $1$ to $n-1$ between strategy $s^j$ and strategy $s^{j+1}$ such that under the receiver's posterior belief after observing no additional outcome being disclosed, there exists $\alpha \in \Delta (A)$ that is optimal for the receiver such that if the receiver takes action $\alpha$ when there is no disclosure, then the following two incentive constraints are satisfied. First,
\begin{itemize}
\item When both strategies $s^j$ and $s^{j+1}$ are played with positive probability, the sender's payoff from strategies $s^j$ and $s^{j+1}$ are weakly higher than those from strategy $s^{\tau}$ for every $\tau \in \{1,2,...,k\}$.
\item When only strategy $s^j$ is played with positive probability, the sender's payoff from strategy $s^j$ is weakly higher than that from strategy $s^{\tau}$ for every $\tau \in \{1,2,...,k\}$.
\end{itemize}
This ensures that sender types $1$ to $n-1$ have no incentive to deviate to alternative strategies. 
Second, the sender's expected payoffs from strategies $s^j$ and $s^{j+1}$ are no more than $s^{nd}$. This ensures that sender type $n$ has no incentive to imitate types $1$ to $n-1$.

A \textit{disclosure rule} is a mapping from $\{\theta_1,...,\theta_k\}$ to the probabilities with which each state is disclosed. A \textit{monotone disclosure rule} is one in which for every $\theta_{j}$ that is disclosed with positive probability, every state $\theta_{\tau}$ with $\tau <j$ is disclosed with probability $1$. By definition, each monotone disclosure rule can be pinned down by its probability of disclosure, denoted by $q$. The strategies for types $1$ to $n-1$ described above map to monotone disclosure rules.

We show there is a monotone disclosure rule under which the first incentive constraint is satisfied. 
Let $V(q) \subseteq \mathbb{R}$ denote the set of sender's expected payoff from disclosing no additional signal when the receiver believes that sender types $1$ to $n-1$  use monotone disclosure rule with disclosure probability $q$. By definition, $u^* \in V(1)$ since only type $0$ discloses no additional signal when types $1$ to $n-1$'s disclosure probability is $1$.
Let $\mathbb{V} \subseteq [0,1] \times \mathbb{R}$ be such that $(q,v) \in \mathbb{V}$ if and only if $v \in V(q)$. 
Let $\mathbb{U} \subseteq [0,1] \times \mathbb{R}$ be such that (i) for every $q$ that can be written as $q=\sum_{j=1}^{\tau} \pi_{j}$ for some integer $\tau$, we have $(q,v) \in \mathbb{U}$ if and only if $v \in [u_{\tau+1},u_{\tau}]$ and (ii) for every $q$ that satisfies $\sum_{j=1}^{\tau} \pi_j< q < \sum_{j=1}^{\tau+1} \pi_j$,  we have $(q,v) \in \mathbb{U}$ if and only if $v=u_{\tau+1}$. 
Both $\mathbb{V}$ and $\mathbb{U}$
are continuous, and moreover, graph $\mathbb{U}$ is decreasing in $q$ with $(1,u_k) \in \mathbb{U}$. 
The hypothesis for this case requires that when $q=q^* \equiv \sum_{j=1}^i \pi_j$, every payoff in $V(q^*)$ is less than $u_i$. 
Since $u_k < u^*$, there exists an  intersection $(\hat{q},\hat{v})$ between $\mathbb{V}$ and $\mathbb{U}$ with $\hat{q} \geq q^*$. 
By definition, any monotone disclosure rule with disclosure probability $q$ and the receiver's action that leads to non-disclosure payoff $v$ will satisfy the first incentive constraint.

In the last step, we verify that the intersection $(\hat{q},\hat{v})$ also satisfies the second incentive constraint. If the intersection is at $\hat{q}=q^*$, then the second incentive constraint is trivially satisfied since $u_i^* \in [u_{i+1}, u_i]$ and $u^* > \sum_{j=1}^i \pi_j u_j + \Big(1-\sum_{j=1}^i \pi_j \Big) u^s(a)$
for every  $a \in \arg\max_{a' \in A} u^r(\pi^i,a')$. In what follows, we focus on the case where the intersection is at $\hat{q}>q^*$. 
For every $\tau \in \{1,2,...,k\}$ and $\beta \in [0,1]$, let $u^*(\tau,\beta)$ be defined as
\begin{equation*}
u^*(\tau,\beta)
= \frac{u^*-\sum_{j=1}^{\tau}\pi_j u_j-\beta\,\pi_{\tau+1}u_{\tau+1}}
{1-\sum_{j=1}^{\tau}\pi_j-\beta\,\pi_{\tau+1}},
\end{equation*}
or equivalently 
\begin{equation}\label{indifferencetypenmixed}
u^*= \sum_{j=1}^{\tau} \pi_j u_j + \beta \pi_{\tau+1} u_{\tau+1}+
\Big(1-\sum_{j=1}^{\tau} \pi_j  - \beta \pi_{\tau+1} \Big) u^*(\tau,\beta).
\end{equation}

Intuitively, suppose the sender mixes between strategy $s^{\tau}$ and strategy $s^{\tau+1}$ with probabilities $1-\beta$ and $\beta$, respectively. Then 
if the sender's payoff from disclosing no additional signal is $u^*(\tau,\beta)$, they are  indifferent between 
strategy $s^{nd}$ and the mixed strategy mentioned earlier in this paragraph. 
We use this auxiliary value $u^*(\tau,\beta)$ to show that in the equilibrium we constructed, the sender's actual payoff from not disclosing an  additional signal is less than the corresponding $u^*(\tau,\beta)$ where $\tau,\beta$ are pinned down by $\sum_{j=1}^\tau \pi_j + \beta \pi_{\tau+1} = \hat{q}$, and hence the second incentive constraint is satisfied.

By definition, $u^*(\tau,0)=u_{\tau}^*$ as defined in (\ref{indifferencetypen}) and $u^*(\tau, 1) = u^*(\tau+1,0)$. 
Recall that $i$ is chosen such that
$u_i^* \equiv u^*(i,0) \geq u_{i+1}$. 
In the proof of Lemma \ref{L2}, we show that $u^*_{\tau+1} \geq u^*_\tau$ if and only if $u_\tau^* \geq u_{\tau+1}$. An induction argument shows that $u^*_{\tau+1} \geq u^*_\tau$ for every $\tau \geq i$.\footnote{The base case $\tau = i$ is true. Suppose $u^*_{\tau+1} \geq u^*_\tau \geq u_{\tau + 1}$, then $u^*_{\tau+1} \geq u^*_\tau \geq u_{\tau+1} > u_{\tau + 2}$, which implies that $u^*_{\tau+2} \geq u^*_{\tau+1}$.} By definition, $u^*(\tau,\beta)$ is increasing in $\beta$ for every $\tau \geq i$ since
\begin{align*}
    \frac{\partial u^*(\tau,\beta)}{\partial \beta} =& \pi_{\tau+1}\frac{u^* - u_{\tau+1} + \sum_{j=1}^\tau \pi_j(u_{\tau+1} - u_j)}{(1-\sum_{j=1}^\tau \pi_j -\beta \pi_{\tau+1})^2} \\
    =& \pi_{\tau+1}\frac{(1-\sum_{j=1}^i \pi_j) u_i^* - \sum_{j=i+1}^\tau \pi_j u_j - \sum_{j=\tau+1}^k \pi_j u_{\tau+1}}{(1-\sum_{j=1}^\tau \pi_j -\beta \pi_{\tau+1})^2} \geq 0,
\end{align*}
where the inequality follows from $u_i^* \geq u_{i+1}$. For monotone disclosure rules, it implies that the function $u^*(\tau,\beta)$, is strictly increasing in the disclosure probability when it is between $q^* = \sum_{j=1}^i \pi_j$ and the intersection $\hat{q}$.

Since $\mathbb{U}$ is decreasing in $q$, $v \leq u_{i+1}$ for every $(q,v) \in \mathbb{U}$ with $q> q^*$. Therefore, suppose the intersection $\hat{q} = \sum_{j=1}^{\tau} \pi_{\tau} + \beta \pi_{\tau+1}$ for some $\tau \geq i$ and $(\hat{q},\hat{v}) \in \mathbb{U}$, then $\hat{v} \leq u_{i+1} \leq u^*(i,0) \leq u^*(\tau,\beta)$, where the last inequality follows from monotonicity of $u^*(\tau,\beta)$. This implies that any intersection between $\mathbb{U}$ and $\mathbb{V}$ where $\hat{q}>q^*$ will satisfy the second incentive constraint.

\end{spacing}
\newpage
\bibliography{bibliography/refs}

@article{kamenica2011bayesian,
  title={Bayesian Persuasion},
  author={Kamenica, Emir and Gentzkow, Matthew},
  journal={American Economic Review},
  volume={101},
  number={6},
  pages={2590--2615},
  year={2011},
  publisher={American Economic Association}
}

@article{kreps1982sequential,
  title={Sequential Equilibria},
  author={Kreps, David and Wilson, Robert},
  journal={Econometrica},
  pages={863--894},
  year={1982},
  publisher={JSTOR}
}

@article{devito2020compliance,
  title={Compliance with Legal Requirement to Report Clinical Trial Results on ClinicalTrials.gov: A Cohort Study},
  author={DeVito, Nicholas J and Bacon, Seb and Goldacre, Ben},
  journal={The Lancet},
  volume={395},
  number={10221},
  pages={361--369},
  year={2020},
  publisher={Elsevier}
}

@article{lipnowski2022persuasion,
  title={Persuasion via Weak Institutions},
  author={Lipnowski, Elliot and Ravid, Doron and Shishkin, Denis},
  journal={Journal of Political Economy},
  volume={130},
  number={10},
  pages={2705--2730},
  year={2022},
  publisher={The University of Chicago Press Chicago, IL}
}

@article{lou2023private,
  title={Private Experimentation, Data Truncation, and Verifiable Disclosure},
  author={Lou, Yichuan},
  journal={arXiv preprint arXiv:2305.04231},
  year={2023}
}

@article{kreutzkamp2025persuasion,
  title={Persuasion without Ex-Post Commitment},
  author={Kreutzkamp, Sophie and Lou, Yichuan},
  journal={Journal of Economic Theory},
  pages={106058},
  year={2025},
  publisher={Elsevier}
}

@article{henry2019research,
  title={Research and the Approval Process: The Organization of Persuasion},
  author={Henry, Emeric and Ottaviani, Marco},
  journal={American Economic Review},
  volume={109},
  number={3},
  pages={911--955},
  year={2019},
  publisher={American Economic Association 2014 Broadway, Suite 305, Nashville, TN 37203}
}

@article{titova2025persuasion,
  title={Persuasion with Verifiable Information},
  author={Titova, Maria and Zhang, Kun},
  journal={Journal of Economic Theory},
  pages={106102},
  year={2025},
  publisher={Elsevier}
}

@article{myerson1982optimal,
  title={Optimal Coordination Mechanisms in Generalized Principal-Agent Problems},
  author={Myerson, Roger B},
  journal={Journal of Mathematical Economics},
  volume={10},
  number={1},
  pages={67--81},
  year={1982},
  publisher={Elsevier}
}

@article{forges1986approach,
  title={An Approach to Communication Equilibria},
  author={Forges, Francoise},
  journal={Econometrica},
  pages={1375--1385},
  year={1986},
  publisher={JSTOR}
}

@article{goltsman2009mediation,
  title={Mediation, Arbitration and Negotiation},
  author={Goltsman, Maria and H{\"o}rner, Johannes and Pavlov, Gregory and Squintani, Francesco},
  journal={Journal of Economic Theory},
  volume={144},
  number={4},
  pages={1397--1420},
  year={2009},
  publisher={Elsevier}
}

@article{corrao2025bounds,
  title={The Bounds of Mediated Communication},
  author={Corrao, Roberto and Dai, Yifan},
  journal={},
  year={2025},
  note={Working Paper}
}

@article{st2025,
title={Bayesian Persuasion without Commitment},
    author={Arieli, Itai and Stewart, Colin},
    journal={},
    year={2025},
note={Working Paper}
}

@article{lin2024credible,
  title={Credible Persuasion},
  author={Lin, Xiao and Liu, Ce},
  journal={Journal of Political Economy},
  volume={132},
  number={7},
  pages={2228--2273},
  year={2024},
  publisher={The University of Chicago Press Chicago, IL}
}

@article{pei2023repeated, 
  author       = {Pei, Harry},
  title        = {Repeated Communication with Private Lying Costs},
  journal      = {Journal of Economic Theory},
  volume       = {210},
  pages        = {105668},
  year         = {2023},
  doi          = {10.1016/j.jet.2023.105668},
  url          = {https://www.sciencedirect.com/science/article/pii/S0022053123000649}
}

@article{best2024persuasion,
  title={Persuasion for the Long Run},
  author={Best, James and Quigley, Daniel},
  journal={Journal of Political Economy},
  volume={132},
  number={5},
  pages={1740--1791},
  year={2024},
  publisher={The University of Chicago Press Chicago, IL}
}

@article{mathevet2024reputation,
  title={Reputation and Information Design},
  author={Mathevet, Laurent and Pearce, David and Stacchetti, Ennio},
  journal={},
  year={2024},
  note={Working Paper}
}

@article{lipnowski2020cheap,
  title={Cheap Talk with Transparent Motives},
  author={Lipnowski, Elliot and Ravid, Doron},
  journal={Econometrica},
  volume={88},
  number={4},
  pages={1631--1660},
  year={2020},
  publisher={Wiley Online Library}
}

@article{kamenica2024commitment,
  title={Commitment and Randomization in Communication},
  author={Kamenica, Emir and Lin, Xiao},
  journal={},
  year={2024},
  note={Working Paper}
}

@article{felgenhauer2014strategic,
  title={Strategic Private Experimentation},
  author={Felgenhauer, Mike and Schulte, Elisabeth},
  journal={American Economic Journal: Microeconomics},
  volume={6},
  number={4},
  pages={74--105},
  year={2014},
  publisher={American Economic Association 2014 Broadway, Suite 305, Nashville, TN 37203-2425}
}

@article{felgenhauer2017bayesian,
  title={Bayesian Persuasion with Private Experimentation},
  author={Felgenhauer, Mike and Loerke, Petra},
  journal={International Economic Review},
  volume={58},
  number={3},
  pages={829--856},
  year={2017},
  publisher={Wiley Online Library}
}

@article{perez2014interim,
  title={Interim Bayesian Persuasion: First Steps},
  author={Perez-Richet, Eduardo},
  journal={American Economic Review},
  volume={104},
  number={5},
  pages={469--474},
  year={2014},
  publisher={American Economic Association 2014 Broadway, Suite 305, Nashville, TN 37203}
}

@article{koessler2023informed,
  title={Informed Information Design},
  author={Koessler, Fr{\'e}d{\'e}ric and Skreta, Vasiliki},
  journal={Journal of Political Economy},
  volume={131},
  number={11},
  pages={3186--3232},
  year={2023},
  publisher={The University of Chicago Press Chicago, IL}
}

@article{zapechelnyuk2023equivalence,
  title={On the Equivalence of Information Design by Uninformed and Informed Principals},
  author={Zapechelnyuk, Andriy},
  journal={Economic Theory},
  volume={76},
  number={4},
  pages={1051--1067},
  year={2023},
  publisher={Springer}
}

@article{rappoport2017incentivizing,
  title={Incentivizing Information Design},
  author={Rappoport, Daniel and Somma, Valentin},
  journal={},
  year={2017},
  note={Working Paper}
}

@article{dye1985disclosure,
  title={Disclosure of Nonproprietary Information},
  author={Dye, Ronald A},
  journal={Journal of Accounting Research},
  pages={123--145},
  year={1985},
  publisher={JSTOR}
}

@article{dziuda2011strategic,
  title={Strategic Argumentation},
  author={Dziuda, Wioletta},
  journal={Journal of Economic Theory},
  volume={146},
  number={4},
  pages={1362--1397},
  year={2011},
  publisher={Elsevier}
}

@article{pei2025community,
  title={Community Enforcement with Endogenous Records},
  author={Pei, Harry},
  journal={Review of Economic Studies},
  pages={forthcoming},
  year={2025},
  publisher={Oxford University Press}
}

@article{fudenberg1991perfect,
  title={Perfect Bayesian Equilibrium and Sequential Equilibrium},
  author={Fudenberg, Drew and Tirole, Jean},
  journal={Journal of Economic Theory},
  volume={53},
  number={2},
  pages={236--260},
  year={1991},
  publisher={Elsevier}
}

@article{lipnowski2024perfect,
  title={Perfect Bayesian Persuasion},
  author={Lipnowski, Elliot and Ravid, Doron and Shishkin, Denis},
  journal={JPE: Microeconomics},
  year={2025},
  note={forthcoming}
}

@article{ali2024design,
  title={From Design to Disclosure},
  author={Ali, S Nageeb and Kleiner, Andreas and Zhang, Kun},
  journal={},
  year={2024},
  note={Working Paper}
}

@article{gao2025inference,
  title={Inference from Selectively Disclosed Data},
  author={Gao, Ying},
  journal={},
  year={2025},
  note={Working Paper}
}

@article{lyu2025information,
  title={Information Design in Cheap Talk},
  author={Lyu, Qianjun and Suen, Wing},
  journal={},
  year={2025},
  note={Working Paper}
}

\end{document}